\tikzset{ball/.style={circle, draw, fill=black,inner sep=0pt, minimum width=4pt}}
\tikzset{nd/.style={inner sep=1pt}}
\tikzset{CRS/.style={circle, draw,inner sep=1pt, minimum width=8pt}}
\pgfplotsset{compat = newest}
\tikzset{>=Latex}
\tikzset{
  set arrow inside/.code={\pgfqkeys{/tikz/arrow inside}{#1}},
  set arrow inside={end/.initial=>, opt/.initial=},
  /pgf/decoration/Mark/.style={
    mark/.expanded=at position #1 with
    {
      \noexpand\arrow[\pgfkeysvalueof{/tikz/arrow inside/opt}]{\pgfkeysvalueof{/tikz/arrow inside/end}}
    }
  },
  arrow inside/.style 2 args={
    set arrow inside={#1},
    postaction={
      decorate,decoration={
        markings,Mark/.list={#2}
      }
    }
  },
}
\newtheorem{thm}{Theorem}[section] % reset theorem numbering for each chapter
\newtheorem{defn}[thm]{Definition} % definition numbers are dependent on theorem numbers
\newtheorem{lem}[thm]{Lemma}
\newtheorem{corol}[thm]{Corollary}
\newcommand{\real}{\mathbb{R}}
\newcommand{\dist}{\mathbf{d}}
\newcommand{\sym}{\mathcal{S}}
\newcommand{\ddt}{\frac{\mathrm{d}}{\mathrm{d} t}}
\newcommand{\arc}[3][]{\begin{tikzcd} #2 \ar[r,-Latex,"#1"] \pgfmatrixnextcell #3 \end{tikzcd}}
\DeclareMathOperator{\supp}{support}
\DeclareMathOperator{\intr}{int}
\DeclareMathOperator{\content}{content}
\title{The Attractor of the Replicator Dynamic in Zero-Sum Games}
\date{}
\author{Oliver Biggar and Iman Shames}
\begin{document}

\maketitle

\begin{abstract}
    In this paper we characterise the long-run behaviour of the replicator dynamic in zero-sum games (symmetric or non-symmetric). Specifically, we prove that every zero-sum game possesses a unique global replicator attractor, which we then characterise. Most surprisingly, this attractor depends \emph{only} on each player's preference order over their own strategies and not on the cardinal payoff values, defined by a finite directed graph we call the game's \emph{preference graph}. When the game is symmetric, this graph is a tournament whose nodes are strategies; when the game is not symmetric, this graph is the game's response graph. We discuss the consequences of our results on chain recurrence and Nash equilibria.
\end{abstract}

\section{Introduction}

Learning in the presence of other learning agents is an increasingly fundamental topic in modern machine learning, motivated by its role at the core of cutting-edge techniques like \emph{learning from self-play}~\citep{silver2016mastering,silver2018general} and \emph{Generative Adversarial Networks}~\citep{goodfellow2020generative}. The challenge of these systems is analysing their collective behavior, which is where learning theory intersects with game theory. To quote~\citet{hofbauer1998evolutionary}, ``a major task of game theory [is] to describe the dynamical outcome of model games described by strategies, payoffs and adaptive mechanisms." That is, when agents learn collectively, what do they learn?

In online learning, the best-known approaches use variants of the Multiplicative Weights Update algorithm (MWU)~\citep{arora2012multiplicative}. To achieve the `no-regret' property, these algorithms typically decrease the step size as more samples are observed. In the long-run, as the step size becomes small, the behavior of a collection of MWU-playing agents converges to the flow of a famous differential equation: the \emph{replicator dynamic}~\citep{taylor1978evolutionary}. This model was originally inspired by biological models of evolution~\citep{smith1973logic}, and is a central object of study in \emph{evolutionary game theory}, the subfield of game theory which focuses on dynamic processes. Since its discovery, the replicator has been extensively analysed by biologists, mathematicians, economists and computer scientists~\citep{hofbauer2003evolutionary,sandholm2010population,kleinberg2011beyond}. Indeed, just as MWU is the flagship algorithm in online learning, the replicator is the flagship dynamic in evolutionary game theory~\citep{sandholm2010population}.

Describing the `dynamical outcome' of games under the replicator dynamic (and hence MWU) involves answering a basic question: \emph{to which strategy profiles do we converge over time?} In dynamical systems, a system's long-run behavior is defined by its \emph{attractors}~\citep{strogatz2018nonlinear}. Attractors are sets of points which are \emph{invariant} (points inside the set remain there for all time), \emph{asymptotically stable} (points in some neighbourhood converge to the set) and \emph{minimal} (they do not contain a smaller set with the first two properties). Understandably, identifying the attractors of the replicator is one of the core questions of evolutionary game theory, and increasingly also algorithmic game theory~\citep[see Section~\ref{sec: related}]{zeeman1980population,hofbauer2003evolutionary,papadimitriou2019game,omidshafiei2019alpha,vlatakis2020no}. The broadest takeaway of this line of work is that a strategy profile is an attractor \emph{if and only if} it is a \emph{strict (pure) Nash equilibrium}~\citep{sandholm2010population}. However, this result only describes the simplest attractors---those which contain only a single point. Most games---especially zero-sum games---don't have any pure Nash equilibria, and most attractors contain more than one point! Instead, the trajectories of the replicator in zero-sum games are typically periodic~\citep{mertikopoulos2018cycles}, and under MWU they are often \emph{chaotic}~\citep{cheung2019vortices,cheung2020chaos}.

We conclude that, despite four decades of research, \textbf{the attractors of the replicator remain largely unknown}, even in zero-sum games, arguably the best-studied special case~\citep[see Section~\ref{sec: related}]{akin1984evolutionary,hofbauer1996evolutionary,hofbauer1998evolutionary,hofbauer2003evolutionary,piliouras2014optimization}. This is what we achieve in this paper: we characterise the attractors of the replicator dynamic in every zero-sum game. Beyond this result, our concepts and techniques shed new light on equilibria, the graph structure of games~\citep{biggar2023graph,biggar2023replicator} and the modelling of payoffs/losses in economics and machine learning.

\begin{figure}
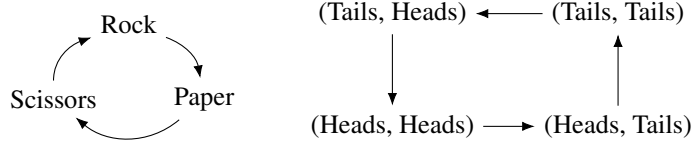

    \centering
    \includestandalone{figs/RPS}
    \qquad
    \includestandalone{figs/MP}
    \caption{The preference graphs (Definition~\ref{def: fundamental graph}) of two zero-sum games: (left) Rock-Paper-Scissors (symmetric), and (right) Matching Pennies (non-symmetric).}
    \label{fig:fundamental graphs}
\end{figure}

Characterising the replicator attractors of zero-sum games is a valuable development, but we believe the most remarkable aspect of this result is the form that this attractor takes. Specifically, the attractor depends \emph{only} on players' discrete preferences over their strategies, and not on the cardinal payoff values. These `preferences' are captured in a directed graph called the \emph{preference graph} of the game. In non-symmetric games this graph coincides with the game's \emph{response graph}~\citep{papadimitriou2019game}\footnote{We use the name \emph{preference graph} to unify the symmetric and non-symmetric cases, and because we find the name ``response graph" can be confusing. The word `response' suggests a \emph{repeated} or \emph{extensive-form} game, when actually the name is just a contraction of ``better-response relation", an ordering which defines player's \emph{preferences}.}, an object which has been of increasing interest in algorithmic game theory, particularly in relation to the replicator~\citep{candogan2011flows,omidshafiei2019alpha,biggar2023graph,biggar2023replicator}. The nodes of the preference graph are the profiles of the game, and the arcs represent which strategies players prefer, given the strategies of the other player. As an example, consider Figure~\ref{fig:fundamental graphs}, which shows the preference graph of the Matching Pennies game. In this game, player 1 prefers to match the choice of player 2, and player 2 prefers to mismatch player 1. The arc $\arc{(T,H)}{(H,H)}$\!\!, for example, captures the fact that, given player 2 plays Heads, player 1 `prefers' Heads over Tails. In symmetric zero-sum games, like Rock-Paper-Scissors (Figure~\ref{fig:fundamental graphs}), the preference graph has an even simpler form where each arc represents the preferred option between some pair of strategies. For example, given Rock `beats' Scissors, in a match-up of Rock and Scissors, one prefers to play Rock, hence the arc $\arc{\text{Scissors}}{\text{Rock}}$\!\!. \textbf{Conceptually, the preference graph stores the underlying combinatorial structure of the game.} Most game-theoretic concepts---including the replicator dynamic and the Nash equilibrium---are defined by cardinal payoffs, which serve as a numerical instantiation of the underlying preference structure. Our result shows that the choice of representation of preferences by numbers has a transient effect: two games with different payoffs but the same preferences have the same long-run behavior, in that their attractors are identical. One practical consequence is that computing the attractor is easy (we can do it by traversing the preference graph). More fundamentally, this lends our prediction stability in the face of uncertainty in our model, an important and rare property in game theory~\citep{von2007theory}.

\subsection{Contributions}

\begin{figure}
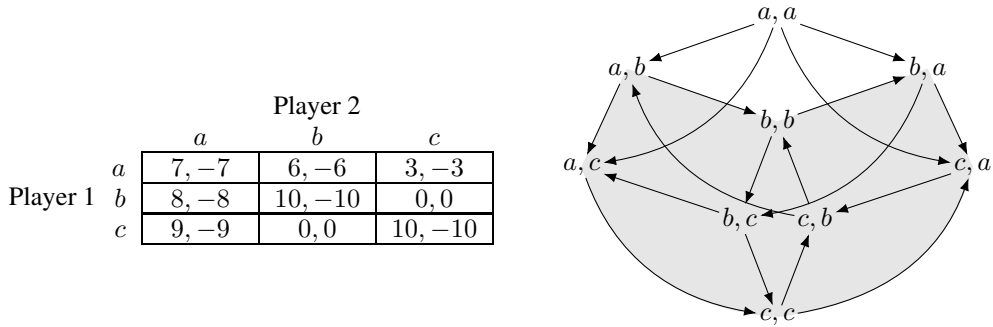

    \centering
    \includestandalone[scale=1]{figs/OD2}
    \caption{A zero-sum game (left) and its associated preference graph (right). The sink component of the graph consists of all profiles other than $(a,a)$. By Theorem~\ref{attractor characterisation}, this game's unique replicator attractor is the \emph{content} (Definition~\ref{def: content}) of the sink component, which is the union of the strategy spaces of the subgames $\{a,b,c\}\times \{b,c\}$ and $\{b,c\}\times\{a,b,c\}$, represented by the shaded region on the graph. Note that the strategy space of the game is 4-dimensional, with the attractor a 3-dimensional region on the boundary.} 
    \label{fig:outer diamond}
\end{figure}

The main result of the paper is Theorem~\ref{attractor characterisation}, which characterises the attractors of zero-sum games. For each zero-sum game, we prove that an attractor exists, is unique and attracts all points on the interior of the game. This attractor is precisely the \emph{content}~\citep[Definition~\ref{def: content}, see][]{biggar2023replicator} of the preference graph's unique\footnote{Uniqueness of the sink connected component is a property of zero-sum games~\citep{biggar2023graph}. Non-zero-sum games, such as the $2\times 2$ Coordination game, may have preference graphs with multiple sink components, and thus can have multiple attractors under the replicator dynamic~\citep{biggar2023graph,biggar2023replicator}.} \emph{sink connected component}, which is a strongly connected component with no arcs from a node inside the component to a node outside. The sink component is a set of pure profiles; its content is the set of \emph{mixed} profiles whose support contains only profiles in this component, which is always an invariant set under the replicator. See Figure~\ref{fig:outer diamond}. A recent result~\citep{biggar2023replicator} demonstrated that every replicator attractor contains the content of some sink component. The challenging part of our proof is showing that the content of the sink component is asymptotically stable in any zero-sum game, and so is itself an attractor. We demonstrate stability using a potential function argument. Our choice of function derives from the preference graph: specifically, we use the total probability mass over all sink component profiles. The proof then separates into two cases, reflecting two standard types of zero-sum game: symmetric and non-symmetric. In evolutionary game theory these are often called \emph{single-}\footnote{Single-population games are also called \emph{population games}~\citep{sandholm2010population} or \emph{matrix games}~\citep{andrade2021learning}.} and \emph{multiple-population} games. The replicator has different properties in each case. In symmetric zero-sum games the preference graph possesses a simplified form, which makes the remainder of the proof straightforward. The non-symmetric case is much more complex. Here we prove a novel transformation of the replicator dynamic (Theorem~\ref{thm: symmetrisation}): the flow of the two-population replicator on a zero-sum game embeds in the flow of the single-population replicator on a larger symmetric zero-sum game, known as its \emph{von Neumann symmetrisation} (Definition~\ref{def:symised game}).%, restricted to the subspace of product distributions.
We believe this result (which also works for MWU) to be of independent interest. While symmetrisation sets up the proof, a further step is required because the sink component of the von Neumann symmetrisation may be larger than that of the original game. Lemma~\ref{lem: two-pop lyapunov} completes the proof using the fact that the dynamics are embedded on the subspace of product distributions. In Section~\ref{sec: nash} we discuss some consequences of our result. Lemma~\ref{lem: preference nash} shows an interesting connection between the preference graph and Nash equilibria in zero-sum games: the support of the equilibrium must be strongly connected as a subgraph of the preference graph and wholly contained within the unique sink connected component. This has important consequences for characterising \emph{sink chain components} (see Section~\ref{sec: related}).

\section{Related Work} \label{sec: related}

Long-run stability of strategy profiles under the replicator is a fundamental topic in evolutionary game theory, especially with regard to classical solution concepts, such as Nash equilibria and \emph{evolutionarily stable strategies}. See \citet{sandholm2010population} for a summary. A particularly relevant work is that of \citet{eshel1983coevolutionary}, who studied zero-sum games under the replicator, proving a crucial \emph{volume conservation} property, analysed in depth by \citet{hofbauer1996evolutionary}. Many papers since~\citep[such as][and this paper]{piliouras2014optimization,vlatakis2020no,biggar2023replicator} have used this property to bound asymptotically stable sets of the replicator. \citet{zeeman1980population} performed an early study of replicator attractors, suggesting that the qualitative behaviour of the replicator can be split into a finite number of classes; we extend this by showing that in zero-sum games the qualitative behaviour is defined by the preference graph alone. \citet{ritzberger1995evolutionary} showed that when a \emph{subgame} is closed under ``weakly better responses", then it is asymptotically stable under the replicator. The preference graph is defined by the weakly better responses, and so this follows as a special case of our result when a connected component of the preference graph is a subgame. Finally, one of the most famous of these classical results proves that the \emph{time-average} of the replicator (and MWU) converges to the Nash equilibrium in zero-sum games~\citep{freund1999adaptive,hofbauer2009time}. However, the time-average behavior is distinct from the day-to-day or \emph{last-iterate} behavior~\citep{papadimitriou2019game}.

% modern developments
Despite these efforts, a general negative conclusion of this line of work is that the replicator doesn't converge to mixed equilibria~\citep{sandholm2010population}, and moreover no dynamic can converge to equilibria in all games~\citep{hart2003uncoupled,benaim2012perturbations}. Instead, its behavior can be chaotic~\citep{sato2002chaos}. Further, finding equilibria is generally computationally intractable~\citep{daskalakis2009complexity}. Consequently, the algorithmic game theory and learning communities have increasingly shifted towards new notions of dynamical outcome which can predict the day-to-day behavior of computational agents in games~\citep{kleinberg2011beyond,papadimitriou2019game}.

 To this end, \emph{sink chain components} were recently proposed~\citep{papadimitriou2019game} as the outcome of dynamic games, with the replicator used as the motivating example. Sink chain components are built on a concept called \emph{chain recurrence} (Definition~\ref{def: chain recurrence}), a generalisation of periodicity which forms the foundation of the Fundamental Theorem of Dynamical Systems~\citep{conley1978isolated}. Crucially, chain components are grounded in computational considerations. To quote~\citeauthor{papadimitriou2019game}, informally, a ``point $x$ is chain recurrent if, whenever Alice starts at $x$, Bob can convince her that she is on a cycle by manipulating the round off error of her computer---no matter how much precision Alice brings to bear.'' Interestingly, this computational solution concept brings us back to the classical notion of an attractor: when a replicator attractor exists, it is a sink chain component (Lemma~\ref{attractors are sink chain components}). Thus, finding the attractors of the replicator dynamic is motivated not only by dynamical systems but also computer science: the attractors give us the strongest prediction of long-run behavior which is consistent with reliable computation.
%chain recurrence progress

This chain recurrence approach has inspired a number of new results on games and the replicator dynamic~\citep{omidshafiei2019alpha,biggar2023graph,biggar2023replicator}. In zero-sum games, when a \emph{fully-mixed} Nash equilibria exists, the behavior is essentially unpredictable: the sink chain component is the whole game~\citep{papadimitriou2016nash,papadimitriou2018nash}. Under MWU, we observe chaotic behavior in these games~\citep{cheung2019vortices,cheung2020chaos}. Further, when a fully-mixed NE does not exist, \citet{piliouras2014optimization} showed that all fully-mixed strategy profiles converge to the subgame containing the equilibrium, called the \emph{essential subgame}, and within this subgame all profiles are chain recurrent~\citep{papadimitriou2016nash}. Surprisingly, however, these results do not characterise the chain components of zero-sum games---the essential subgame is generally \emph{not} a sink chain component. The reason is that the convergence to the essential subgame is not \emph{uniform}, and so the essential subgame is typically \emph{not} asymptotically stable (see Section~\ref{sec: nash}), one of the defining properties of attractors and sink chain components~\citep{alongi2007recurrence} and a key property for predicting learning processes~\citep{vlatakis2020no,omidshafiei2019alpha}. Instead, interior profiles which are \emph{arbitrarily close} to the essential subgame may move far away before returning. The question remains unanswered: what are the replicator sink chain components/attractors of zero-sum games?

% our result on chain components

The story of predicting the replicator seems generally negative: the replicator may be chaotic, may not converge, or may only converge in time-average or non-uniformly. Our results tell a different, more positive story: we characterise the attractor/sink chain component of the replicator, which, while larger than the essential subgame, is the smallest outcome which is consistent with computation, in the sense of chain recurrence.  What's more, being defined by discrete preferences, the solution is natural and robust. As an example, in Figure~\ref{fig:outer diamond}, shifting the payoff for the profile $(a,c)$ from $(3,-3)$ to $(4,-4)$ moves the essential subgame from $\{b,c\}\times\{b,c\}$ to $\{a,c\}\times\{b,c\}$. However, all preferences remain unchanged, so the graph and hence the attractor do not change. Shifting focus from the Nash equilibrium also gives a new perspective on chain recurrence. Prior approaches~\citep{papadimitriou2016nash,mertikopoulos2018cycles} suggested a connection between equilibria and chain recurrence in zero-sum games. We find instead that chain recurrence in zero-sum games is \emph{entirely} defined by the preference graph (Lemma~\ref{sink chain components}). The previous findings are now explained by a non-trivial connection between equilibria and the preference graph: the existence of a fully-mixed equilibrium implies strong connectedness of the preference graph (Lemma~\ref{lem: preference nash}). See Section~\ref{sec: nash}.

\section{Preliminaries} \label{sec: preliminaries}

In game theory, a game is defined by a triple consisting of the \emph{players}, \emph{strategy sets} for each player, and \emph{payoffs}. A combination of strategies for each player is called a \emph{strategy profile} or simply a profile, and for each profile there is a real-valued payoff to each player. In this paper we focus on two-player games, where we denote the players by the integers 1 and 2 and their strategy sets by $S_1$ and $S_2$. The strategy names are simply labels, so we assume $S_1 = [n] := \{i\in \mathbb{N}_0 |\ i < n\}$ and $S_2 = [m] := \{i\in \mathbb{N}_0 |\ i < m\}$. The profiles are the pairs $S_1\times S_2$. We call this an $n\times m$ game. An $n\times m$ game is defined by a pair $(A,B)$ of matrices, $A\in\real^{n\times m}$ and $B\in\real^{m\times n}$, representing the payoffs to players 1 and 2 respectively. A game is \emph{symmetric} if $A = B$. Intuitively, a game is symmetric if the payoff is determined solely by the choice of strategies, and not identity of the player. We focus on \emph{zero-sum games}, which we represent by a single matrix $M\in \real^{n\times m}$, implicitly assumed to be the payoffs for the first player, which defines the game $(M,-M^T)$. That is, in a profile $(s_1,s_2)$, player 1 receives $M_{s_1,s_2}$ and player 2 receives $(-M^T)_{s_2,s_1} = -M_{s_1,s_2}$. A zero-sum game is also a symmetric game if $M = -M^T$, that is, $M$ is \emph{anti-symmetric}. Thus, there is a natural one-to-one correspondence between anti-symmetric real matrices and symmetric zero-sum games.
A \emph{subgame} of a game is formed by choosing subsets $T_1\subseteq S_1$ and $T_2\subseteq S_2$ of each player's strategy sets and restricting the game to the profiles in $T_1\times T_2$. We typically represent a subgame by its product set of profiles $T_1\times T_2$.

A \emph{mixed strategy} is a distribution over a player's strategies, and a \emph{mixed profile} is an assignment of a mixed strategy to each player. We sometimes refer to a profile as a \emph{pure profile} to distinguish it from a mixed profile. If $x$ is a mixed strategy, and $s$ a strategy, we write $x_s$ for the $s$-entry of $x$. Like profiles, we denote mixed profiles by pairs $(x,y)$ where $x$ and $y$ are mixed strategies for the first and second player, respectively. The \emph{support} of a mixed strategy $x$, written $\supp(x)$, is the set of strategies $s$ where $x_s$ is non-zero. The \emph{support} of a mixed profile $(x,y)$ is the Cartesian product $\supp(x)\times \supp(y)$, the set of profiles whose strategies are in the support of $x$ and $y$ respectively. As distributions over a finite set, mixed strategies can be naturally embedded in the standard probability simplex in Euclidean space, by choosing some arbitrary ordering of the strategies in $S_1$ and $S_2$. We denote these spaces by $\Delta(S_1)$ and $\Delta(S_2)$. The set of mixed profiles is the product $\Delta(S_1)\times \Delta(S_2)$, which we call the \emph{strategy space} of the game, and we refer to mixed profiles as `points' in strategy space. The strategy space is also naturally embedded in Euclidean space, so we can talk about geometric properties of sets of mixed profiles. The payoffs extend to mixed profiles using expectation. In a zero-sum game $(M,-M^T)$, the \emph{expected payoff} of a mixed profile $(x,y)$ is $y^T M x$ to player 1 and $x^T (-M^T) y = -(y^T M x)$ to player 2.

In evolutionary game theory, we think of an $n\times m$ game as a pair of `populations', with the strategies as `types' within each population. A mixed strategy $x$ represents the distribution of types in the population. Symmetric games, called \emph{single-population games} in this context, are viewed as having a a single underlying population, with types from the strategy set $S$. See~\cite{sandholm2010population}. Viewed as a two-player game, all mixed profiles in a single-population game are of the form $(x,x)$, because there is only one underlying population. This leads to an important terminology convention: \emph{in symmetric games, (mixed) profiles are the same as (mixed) strategies}. Intuitively, this is because a profile is a choice of strategy for each player, and symmetric games have `only one player'. The strategy space of the game consists of the symmetric profiles $(x,x)$ in $\Delta(S)\times \Delta(S)$. We write the mixed profile $(x,x)$ as the mixed strategy $x$, and write simply $\Delta(S)$ for the strategy space of the symmetric game.

\subsection{Preference graphs}

Two profiles are \emph{$i$-comparable} if they differ only in the strategy of player $i$; they are \emph{comparable} if they are $i$-comparable for some player $i$. If two profiles are comparable, then there is exactly one $i$ such that they are $i$-comparable. %We say a game is \emph{strict} if the payoffs to player $i$ in two $i$-comparable profiles are never equal. 
In symmetric games, profiles and strategies are the same, and we define all profiles to be comparable. Up to strategic equivalence, the game is defined by the \emph{payoff differences} between comparable profiles~\citep{candogan2011flows}. We store this in a matrix we call the \emph{weight matrix} $W$ of the game.

\begin{defn}
    Let $M$ be a zero-sum game, and let $p$ and $q$ be comparable profiles. If $M$ is a symmetric game, then profiles and strategies are the same, and we define $W_{p,q}$ to be the same as $M_{p,q}$. If $M$ is non-symmetric, then $p=(p_1,p_2)$ and $q=(q_1,q_2)$ and
    \[
    W_{p,q} =
    \begin{cases}
    M_{p_1,p_2} - M_{q_1,q_2} & p_2 = q_2\ \text{(\textit{i.e.}, $p$ and $q$ are $1$-comparable)} \\
    M_{q_1,q_2} - M_{p_1,p_2} & p_1 = q_1\ \text{(\textit{i.e.}, $p$ and $q$ are $2$-comparable)}
    \end{cases}
    \]
    If $p$ and $q$ are not comparable, then $W_{p,q}$ \emph{is undefined}.
\end{defn}
We deliberately leave the payoff differences between incomparable profiles undefined, so that it is clear to the reader that we will only reference $W$ when the associated profiles are comparable. Note that for any comparable profiles $p$ and $q$, $W_{p,q} = -W_{q,p}$. %In a strict game, $p\neq q$ implies $W_{p,q} \neq 0$.
\begin{defn} \label{def: fundamental graph}
    Let $M$ be a zero-sum game. The \emph{preference graph} of $M$ is the graph whose nodes are the profiles of the game and where there is an arc $\arc{p}{q}$ between profiles $p$ and $q$ if and only if they are comparable and $W_{p,q} \leq 0$ (equivalently, $W_{q,p} \geq 0$).
\end{defn}
While the definition is the same for symmetric and non-symmetric games, the resultant graphs are not the same, because the weight matrix is defined differently. In symmetric zero-sum games all profiles are comparable, so the preference graph is a \emph{tournament}\footnote{In the degenerate case where $W_{p,q} = 0$, there is a pair of arcs 0-weighted arcs between nodes $p$ and $q$, which makes the preference graph not, strictly speaking, a tournament, but this will cause no problems.} (a directed graph with an arc between every pair of nodes). In non-symmetric games the preference graph is the game's response graph~\citep{biggar2023graph}, which is never a tournament because not all profiles are comparable. We think of the entries in the weight matrix as being weights on the associated arc, as in $\arc[|W_{p,q}|]{p}{q}$.
\subsection{Dynamical Systems and the Replicator} \label{sec: dynamical systems}
The replicator dynamic is a continuous-time dynamical system~\citep{sandholm2010population}, defined by an ordinary differential equation. Let $x\in\Delta(S_1)$ and $y\in\Delta(S_2)$ be mixed strategies, and let $s\in S_1$ and $t\in S_2$ be pure strategies. Then, for a (non-symmetric) zero-sum game $M$ we have
\begin{defn}[Non-Symmetric Zero-Sum Replicator Equation] \label{def: nonsymmetric replicator}
\begin{align*}
    \dot x_s &= x_s ((M y)_s - x^T M y) \\
    \dot y_t &= -y_t ((M^T x)_t - x^T M y)
\end{align*}
\end{defn}
In a symmetric game $M$, the replicator is defined a similar way:
\begin{equation*}
    \dot x_s = x_s ((M x)_s - x^T M x)
\end{equation*}
If $M$ is also zero-sum, then $x^T M x = 0$ (by anti-symmetry), and this reduces to
\begin{defn}[Symmetric Zero-Sum Replicator Equation] \label{def: symmetric replicator}
\[
    \dot x_s = x_s (M x)_s
\]
\end{defn}

Note that the symmetric and non-symmetric replicator, while similar, are distinct equations with different properties. The solutions to these equations define a \emph{flow}~\citep{sandholm2010population} on the strategy space of the game, which is a function $\phi:X\times \real\to X$ that is a continuous group action of the reals on $X$. We call these the symmetric or non-symmetric \emph{replicator flow}, respectively. The forward orbit of the flow from a given point is called a \emph{trajectory} of the system. A set of points $Y$ is called \emph{invariant} under $\phi$ if $\phi(Y,t) = Y$ for any $t\in\real$.

\begin{defn} \label{def: attractor}
Let $X$ be a compact space, $\phi$ a flow on $X$ and $A$ a compact subspace of $X$. If there is a neighbourhood $U$ of $A$ such that \[
\lim_{t\to\infty} \sup_{x\in U} \inf_{y \in A} \dist(\phi(x,t), y) = 0
\]
under any metric $\dist$, then we say $A$ is \emph{asymptotically stable}. If $A$ is also invariant under $\phi$, we call it an \emph{attracting set}. An attracting set which contains no smaller attracting sets is called an \emph{attractor}.
\end{defn}
There are some differences in terminology in the literature to be wary of. What we call attracting sets are sometimes called attractors~\citep{conley1978isolated,sandholm2010population,biggar2023replicator}, in which case what we call an attractor is a \emph{minimal attractor}. Otherwise, our definition is the same as~\citet{sandholm2010population,biggar2023replicator}.
The set of points which approach an attractor $A$ in the limit $t\to\infty$ is called the attractor's \emph{basin of attraction}. We call an attractor \emph{global} if its basin of attraction includes all points in $\intr(X)$~\citep{hofbauer2003evolutionary}.

\section{The Attractor of the Replicator}

In this section we prove Theorem~\ref{attractor characterisation}, which characterises the attractor of the replicator in zero-sum games. We begin by noting that any mixed profile $x$ naturally defines a distribution over profiles, with $x_p$ denoting the mass on a profile $p$. If the game is symmetric, profiles and strategies are the same and so this is trivial. In a non-symmetric game, $x=(x_1,x_2)$ is a pair of mixed strategies, and the distribution is defined by the \emph{product}, with the mass on a profile $p = (p_1,p_2)$ defined by
\begin{equation} \label{x sub p}
x_p := {x_1}_{p_1} {x_2}_{p_2}
\end{equation}
This distribution over profiles is used to define an important concept: the \emph{content} of a set of profiles.
\begin{defn}[\citeauthor{biggar2023replicator}] \label{def: content}
Let $H$ be a set of profiles in a game. The \emph{content} of $H$, denoted $\content(H)$, is the set of all mixed profiles $x$ where all profiles in the support of $x$ are in $H$.
\end{defn}

Equivalently, $x\in\content(H)$ if and only if $x_H := \sum_{h\in H} x_h = 1$, that is, $x$ defines a distribution whose mass is entirely distributed over profiles in $H$. The content is a union of subgames, and so is an invariant set under the replicator~\cite{sandholm2010population}. An example is shown in Figure~\ref{fig:outer diamond}.
We will show that the unique global attractor is the content of the unique sink component of the preference graph. Uniqueness follows from graph structure, established originally in~\cite{biggar2023graph}.
\begin{lem}[Uniqueness] \label{lem: uniqueness}
    The preference graph of a zero-sum game has a unique sink component.
\end{lem}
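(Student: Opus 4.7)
The plan is to split the proof into the symmetric and non-symmetric cases, since the preference graph has different structure in each.

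For the symmetric case, the preference graph is a tournament on the strategies of the game (by definition, all pairs of strategies are comparable). Between any two distinct SCCs of a tournament, all arcs must go in a single direction, since arcs in both directions would merge the two SCCs. Hence the condensation is itself a tournament on its SCCs, and being a DAG it is necessarily a transitive tournament---that is, a total order. A total order has a unique minimum, giving the unique sink SCC.

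For the non-symmetric case, the plan is to argue via $2\times 2$ subgames. A direct enumeration shows that the preference graph of any $2\times 2$ zero-sum game is either (i) a directed 4-cycle through all four profiles (when no pure Nash equilibrium exists in the subgame), or (ii) contains a pure Nash equilibrium $e$ as a local sink with the two profiles adjacent to $e$ in the subgame having arcs pointing into $e$. Suppose for contradiction that $C_1 \neq C_2$ are distinct sink SCCs of the full preference graph and pick $p\in C_1$, $q\in C_2$. If $p$ and $q$ are comparable, then any arc between them enters both sinks, forcing $C_1=C_2$. Otherwise $p=(p_1,p_2)$ and $q=(q_1,q_2)$ differ in both coordinates, and I consider the $2\times 2$ subgame on $\{p_1,q_1\}\times\{p_2,q_2\}$ with remaining profiles $r=(q_1,p_2)$ and $s=(p_1,q_2)$. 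In subcase (i), all four profiles are in one SCC of the full graph, so $C_1=C_2$. In subcase (ii), I enumerate the four possible locations of the pure NE; in each case the subgame arcs together with the sink property force the NE into $C_1 \cap C_2$. For instance, if the NE is $r$, the arcs $p\to r$ and $q\to r$ combined with $p\in C_1$ and $q\in C_2$ (both sinks) place $r$ in both $C_1$ and $C_2$.

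The main subtle point will be the degenerate tie case ($W=0$ on some subgame arc), which allows multiple pure NEs in the subgame simultaneously. However, a zero weight corresponds to arcs in both directions, so the relevant comparable profiles all land in a single SCC of the full graph, and the same contradiction applies. Overall, the $2\times 2$ subgame analysis---crucially exploiting the zero-sum property of each subgame to restrict its preference graph to the two forms above---is the main technical ingredient of the non-symmetric case.
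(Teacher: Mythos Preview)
Your symmetric case matches the paper's: both observe that the preference graph is a tournament, whose condensation is necessarily a transitive tournament and hence has a unique sink.

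For the non-symmetric case, the paper simply invokes Theorem~4.10 of \citet{biggar2023graph}, which establishes that the response graph of any zero-sum game has a unique sink component. Your $2\times 2$-subgame argument is a genuinely different, self-contained route. It is essentially correct, but one step in your enumeration deserves care: your worked example ($e=r$) is the easy case, where the pure NE is comparable to both $p$ and $q$. When the NE is $p$ itself (or $q$), the two subgame arcs into $e=p$ come from $r$ and $s$, not from $q$, so you cannot directly place $e$ in $C_2$. Instead you need the extra observation that the corner \emph{opposite} a pure NE in a non-degenerate $2\times 2$ zero-sum subgame has at least one outgoing arc (otherwise it too would be a sink, forcing all four payoffs equal---your tie case); following that arc from $q$ to $r$ or $s$, and then the arc into $p$, gives a two-step path placing $p$ in the sink $C_2$. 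With this filled in, your argument goes through and has the virtue of being elementary and independent of the cited reference; the paper's approach trades that for brevity.
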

The proof of Lemma~\ref{lem: uniqueness} can be found in the appendix. Now we can prove our main theorem.

\begin{thm}[The Attractor of the Replicator] \label{attractor characterisation}
In a (symmetric or non-symmetric) zero-sum game $M$, the content of the unique sink component $H$ of its preference graph is the unique global attractor of the (respectively symmetric or non-symmetric) replicator dynamic.
\end{thm}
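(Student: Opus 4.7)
The plan is to reduce the theorem to showing that $\content(H)$ is asymptotically stable with basin of attraction containing $\intr(\Delta(S_1)\times \Delta(S_2))$. Uniqueness of $H$ is Lemma~\ref{lem: uniqueness}, and $\content(H)$ is invariant under the replicator as a union of subgames. The result of \citet{biggar2023replicator} cited in the introduction guarantees that every replicator attractor contains the content of some sink component; since $H$ is the unique one, every attractor must contain $\content(H)$. Hence if $\content(H)$ is itself an attracting set, minimality forces it to be the unique global attractor.

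For asymptotic stability I would use the potential function
\[
V(x) \;=\; x_H \;=\; \sum_{h \in H} x_h,
\]
the total probability mass on $H$. It is smooth, bounded above by $1$, and $V(x) = 1$ exactly on $\content(H)$. The central task is to show that $V$ is non-decreasing along the replicator flow on the interior and strictly increasing off $\content(H)$. Granted this, a standard Lyapunov/LaSalle argument produces asymptotic stability: the $\omega$-limit set of any interior trajectory lies in the largest invariant subset of $\{\dot V = 0\}$, which on $\{V > 0\}$ will coincide with $\content(H)$.

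The symmetric case reduces to a direct computation. Using $\dot x_s = x_s(Mx)_s$ and splitting indices into $H$ and its complement,
\[
\dot V \;=\; \sum_{h, h' \in H} x_h\, M_{h,h'}\, x_{h'} \;+\; \sum_{h \in H,\, s \notin H} x_h\, M_{h,s}\, x_s.
\]
The first sum vanishes because $M = -M^T$ is anti-symmetric; the second is non-negative because the sink-component property of $H$, combined with Definition~\ref{def: fundamental graph}, forces $M_{h,s} = W_{h,s} > 0$ for every $h \in H$, $s \notin H$ (otherwise there would be an arc $\arc{h}{s}$ leaving $H$). On $\{V > 0\}$, $\dot V = 0$ then forces $x_s = 0$ for every $s \notin H$, i.e.\ $x \in \content(H)$, as needed.

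The non-symmetric case is the principal obstacle and is handled by the symmetrisation result (Theorem~\ref{thm: symmetrisation}), which embeds the two-population replicator flow on $M$ into the single-population replicator flow on a larger symmetric zero-sum game restricted to the submanifold of product distributions. Simply quoting the symmetric case on the symmetrisation does not suffice, because its sink component may strictly contain the embedded image of $H$. Instead I would keep the same $V$ built from the original $H$, expressed via equation~(\ref{x sub p}) as $V(x_1, x_2) = \sum_{h \in H} {x_1}_{h_1} {x_2}_{h_2}$, and invoke Lemma~\ref{lem: two-pop lyapunov}: the boundary arcs $(h,s)$ with $h \in H$, $s \notin H$ comparable in the response graph all point into $H$ by the sink-component property, and the lemma packages this combinatorial fact into non-negativity of $\dot V$, with strict inequality off $\content(H)$ on the product-distribution submanifold. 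Combining the two cases yields asymptotic stability of $\content(H)$ and completes the proof.
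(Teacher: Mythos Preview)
Your proposal is correct and follows essentially the same approach as the paper: the same reduction to invariance/minimality/stability, the same potential function $V=x_H$, the identical anti-symmetry computation in the symmetric case, and the same appeal to Theorem~\ref{thm: symmetrisation} together with Lemma~\ref{lem: two-pop lyapunov} for the non-symmetric case. The only cosmetic difference is that you close with a LaSalle/$\omega$-limit argument where the paper instead uses uniform continuity of $\dot x_H$ on compact level slabs $\{x_H\in[\alpha,\beta]\}$ to push $x_H$ above any $\beta<1$ in finite time.
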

\begin{proof}
    Proving Theorem~\ref{attractor characterisation} requires showing (i) $\content(H)$ is an invariant set (\emph{invariance}), (ii) every attracting set contains $\content(H)$ (\emph{minimality}) and (iii) $\content(H)$ is asymptotically stable and its basin of attraction contains $\intr(X)$ (\emph{global asymptotic stability}). We establish (i) and (ii) in Lemma~\ref{lem: lower bound}.
    \begin{lem}[Invariance and Minimality] \label{lem: lower bound}
    If $H$ is the sink component of the preference graph of a (symmetric or non-symmetric) zero-sum game $M$, then $\content(H)$ is invariant under the replicator. Further, for any attracting set $A$, $\content(H)\subseteq A$.
    \end{lem}
    Lemma~\ref{lem: lower bound} is a streamlined presentation of existing results~\citep[particularly Theorem~5.2 of][]{biggar2023replicator}, so we defer its proof to the appendix. The challenge and main contribution of Theorem~\ref{attractor characterisation} lies in (iii): showing global asymptotic stability of $\content(H)$. We do this by demonstrating that $x_H$, the total mass on the sink component $H$, increases over time (Lemmas~\ref{lem: symmetric lyapunov} and~\ref{lem: two-pop lyapunov}). That is, $\dot x_H = \ddt x_H = \sum_{h\in H} \ddt x_h = \sum_{h\in H}\dot x_h > 0$, for any $x\in\intr(X)$. The function $x_H$ is a natural choice, because $x_H$ is uniformly continuous, bounded in $[0,1]$ and $x_H = 1$ if and only if $x\in\content(H)$, so $x_H$ can be thought of as a metric for the distance between $x$ and the content. Showing $\dot x_H > 0$ requires different arguments for the symmetric and non-symmetric cases.  The symmetric case is straightforward, and we complete it in Lemma~\ref{lem: symmetric lyapunov} below.

\begin{lem} \label{lem: symmetric lyapunov}
In a symmetric zero-sum game $M$, under the symmetric replicator (Definition~\ref{def: symmetric replicator}), if $x_H\in (0,1)$ then $\dot x_H > 0$.
\end{lem}
\begin{proof}
First, $\dot x_H = \sum_{h\in H} x_h (M x)_h$. As $M$ is anti-symmetric, for any profiles $h$ and $q$ we have $x_h x_q M_{h,q} + x_q x_h M_{q,h} = 0$. Hence,
\[
\dot x_H = \sum_{q\in S}\sum_{h\in H} x_q x_h M_{h,q} = \sum_{q\in H}\sum_{h\in H} x_q x_h M_{h,q} + \sum_{q\not\in H}\sum_{h\in H} x_q x_h M_{h,q} = \sum_{q\not\in H}\sum_{h\in H} x_q x_h M_{h,q}
\]
For any $q\notin H$ and $h\in H$, as $H$ is a sink component, there is an arc $\arc{q}{h}$ in the preference graph with strictly positive weight. That is, $M_{h,q} > 0$. Hence the summands are nonnegative. Finally, because $x_H \in (0,1)$, there must exist a $q\not\in H$ and an $h\in H$ such that $x_qx_h > 0$. Thus, $\dot x_H > 0$.
\end{proof}

The more challenging case of the proof is showing that $\dot x_H > 0$ in non-symmetric games (Lemma~\ref{lem: two-pop lyapunov}). We solve this using Theorem~\ref{thm: symmetrisation}, which establishes that we can embed the flow of the non-symmetric replicator on our non-symmetric game $M$ into the flow of the symmetric replicator on a larger symmetric game, $M$'s \emph{von Neumann symmetrisation}.
\begin{defn}[Von Neumann Symmetrisation]\label{def:symised game}
Let $M$ be a zero-sum game, with $M\in \real^{n\times m}$. The \emph{von Neumann symmetrisation} of $M$, written $\sym_M$, is defined as the following $\real^{(nm)\times (nm)}$ matrix, which we index by profiles $p = (p_1,p_2)$ and $q = (q_1,q_2)$:
\[
(\sym_M)_{p,q} = M_{p_1,q_2} - M_{q_1,p_2}.
\]
\end{defn}
The strategy space of $\sym_M$ is $S_1\times S_2$. That is, the \emph{strategy profiles} of the original game $M$ become \emph{strategies} of the von Neumann symmetrisation. Mixed profiles likewise become \emph{mixed strategies}, using the production distribution as in equation~\eqref{x sub p}: $x_p := {x_1}_{p_1} {x_2}_{p_2}$.

This construction appeared first in~\citeauthor{gale20167}, who attributed it to John von Neumann, hence the name. It possesses an intuitive explanation: it results from holding two plays of the original game $M$, with each player taking the role of the row player 1 in one play and player 2 in the other~\citep{gale20167}. The von Neumann symmetrisation has two important properties: (1) it is anti-symmetric, and so is a symmetric zero-sum game, and (2) it can be viewed as an extension of the weight matrix $W$, previously only defined on comparable profiles, to a relation over all profiles. That is, if $p$ and $q$ are comparable then $(\sym_M)_{p,q} = W_{p,q}$. More generally,
\begin{lem} \label{lemma: symmetry weights}
Let $p = (p_1,p_2)$ and $q = (q_1,q_2)$ be profiles. Then:
\[ 
(\sym_M)_{p,q} = W_{p,(p_1,q_2)} + W_{p,(q_1,p_2)} = W_{(p_1,q_2),q} + W_{(q_1,p_2),q}.
\]
\end{lem}
 In Theorem~\ref{thm: symmetrisation}, we establish a novel link between the replicator flow on a zero-sum game and its von Neumann symmetrisation.

\begin{thm} [Symmetrising the Replicator Dynamic] \label{thm: symmetrisation}
Let $M$ be a non-symmetric zero-sum game. Let $x = (x_1,x_2)$ be a mixed profile and $p = (p_1,p_2)$ a pure profile. Write $x_p := {x_1}_{p_1} {x_2}_{p_2}$ as in equation~\eqref{x sub p}. Then, under the non-symmetric replicator (Definition~\ref{def: nonsymmetric replicator}),
\[ \dot x_p = x_p (\sym_M x)_p. \]
\end{thm}
In English, Theorem~\ref{thm: symmetrisation} states that the flow of the non-symmetric replicator (Definition~\ref{def: nonsymmetric replicator}) on $M$ embeds (on the subspace of product distributions) in the flow of the symmetric replicator (Definition~\ref{def: symmetric replicator}) on $\sym_M$. The proofs of Theorem~\ref{thm: symmetrisation} and Lemma~\ref{lemma: symmetry weights} are straightforward and can be found in the appendix. %Notice that the resultant expression is exactly the definition of the symmetric replicator (\ref{def: symmetric replicator}) on the symmetric zero-sum game $\sym_M$.
Using Theorem~\ref{thm: symmetrisation}, we can complete the proof of Theorem~\ref{attractor characterisation} on non-symmetric games by showing that $x_H$ is increasing in $x_H\in (0,1)$.

\begin{lem} \label{lem: two-pop lyapunov}
    In a non-symmetric zero-sum game $M$, under the non-symmetric replicator (Definition~\ref{def: nonsymmetric replicator}), if $x_H\in (0,1)$ then $\dot x_H$ > 0.
\end{lem}
\begin{proof}
If $h=(h_1,h_2)$ is a pure profile, then by Theorem~\ref{thm: symmetrisation}, $\dot x_h = \ddt ({x_1}_{h_1}{x_2}_{h_2})= x_h (\sym_M x)_h $. Because $\sym_M$ is symmetric, by the same argument as in Lemma~~\ref{lem: symmetric lyapunov}, we can show that
    \[
\dot x_H = \sum_{q\not\in H}\sum_{h\in H} x_q x_h (\sym_M)_{h,q}.
\]
Now pick some profiles $q\not\in H$ and $h\in H$ with $x_qx_h > 0$. As in Lemma~\ref{lem: symmetric lyapunov}, because $x_H \in (0, 1)$, at least one such pair exist. We will show that the sum above is strictly positive. Firstly, observe that if $q$ and $h$ are comparable, the arc $\arc{q}{h}$ of the preference graph of $M$ goes from $q$ to $h$ with strictly positive weight, because $q\not\in H$ and $h\in H$, and so $(\sym_M)_{h,q} = W_{h,q} > 0$. Otherwise, suppose  that $q = (q_1,q_2)$ and $h = (h_1,h_1)$ are not comparable, and let $a = (q_1,h_2)$ and $b = (h_1,q_2)$. We have the following three cases:
\begin{enumerate}
    \item $a,b\in H$. Then the arcs $\arc{q}{b}$ and $\arc{q}{a}$ are directed towards $a$ and $b$ because $q$ is outside the sink component $H$. By Lemma~\ref{lemma: symmetry weights}, $(\sym_M)_{h,q} = W_{a,q} + W_{b,q} > 0$.

    \item $a,b\not\in H$. 
    Then the arcs $\arc{a}{h}$ and $\arc{b}{h}$ are directed towards $h$ because $a$ and $b$ are outside the sink component. By Lemma~\ref{lemma: symmetry weights}, $(\sym_M)_{h,q} = W_{h,a} + W_{h,b} > 0$.

    \item $a\in H$, $b\not\in H$ (the case $b\in H$, $a\not\in H$ is identical). The sum $\sum_{q\not\in H}\sum_{h\in H} x_q x_h (\sym_M)_{h,q}$ includes the terms $x_qx_h (\sym_M)_{h,q}$ and $x_bx_a (\sym_M)_{a,b}$. However, $x_qx_h = ({x_1}_{q_1}{x_2}_{q_2})({x_1}_{h_1}{x_2}_{h_2}) = ({x_1}_{q_1}{x_2}_{h_2})({x_1}_{h_1}{x_2}_{q_2}) = x_ax_b$, and so $x_qx_h (\sym_M)_{h,q} + x_bx_a (\sym_M)_{a,b} = x_qx_h ((\sym_M)_{h,q} + (\sym_M)_{a,b}) = x_qx_h(W_{a,q} + W_{b,q} + W_{q,b} + W_{h,b}) = x_qx_h(W_{a,q} + W_{h,b})$ by Lemma~\ref{lemma: symmetry weights}. Since $a$ and $h$ are inside the sink component and $q$ and $b$ are outside, and the arcs $\arc{q}{a}$ and $\arc{b}{h}$ must be directed into the component, so $W_{a,q} > 0$ and $W_{h,b} > 0$ and thus $x_qx_h (\sym_M)_{h,q} + x_bx_a (\sym_M)_{a,b} > 0$.
\end{enumerate}

Overall, we conclude that $\dot x_H > 0$.
\end{proof}
Finally, asymptotic stability of $\content(H)$ follows easily from the fact that $x_H$ is uniformly continuous. Pick any $0 < \alpha < \beta < 1$. Then for any $x$ with $x_H \in [\alpha,\beta]$, $\dot x_H > \epsilon$ for some $\epsilon$ (uniform continuity), so after some finite time $x_H > \beta$. Repeating this argument for any $\alpha,\beta$ shows that $\content(H)$ is asymptotically stable.
\end{proof}

\section{Chain Recurrence and Nash Equilibria} \label{sec: nash}

In this section we discuss some consequences of Theorem~\ref{attractor characterisation}. The first concerns chain recurrence, which is defined by $(\epsilon,T)$-chains.
\begin{defn}[$(\epsilon,T)$-Chains, \citeauthor{alongi2007recurrence}] \label{def: chain recurrence}
Let $\phi$ be a flow on a compact metric space $X$, with $x$ and $y$ in $X$. An \emph{$(\epsilon,T)$-chain} from $x$ to $y$ is a finite sequence of points $x_1,x_2,\dots,x_n$ with $x=x_1$ and $y=x_n$, and times $t_1,\dots,t_n \in [T,\infty)$ such that $\dist(\phi(x_i, t_i), x_{i+1}) < \epsilon$. If there is an $(\epsilon,T)$-chain from $x$ to $y$ for \emph{all} $\epsilon > 0$ and $T>0$ we say there is a \emph{pseudo-orbit} from $x$ to $y$.
\end{defn}
A point is called \emph{chain recurrent} if it has a pseudo-orbit to itself. Two points are \emph{chain equivalent} if there are pseudo-orbits between them in both directions, and equivalent chain recurrent points are grouped in topologically connected components called \emph{chain components}~\citep{alongi2007recurrence}. Reachability under pseudo-orbits provides an ordering on the chain components, and sink chain components are those which are minimal in this order. Sink chain components have been increasingly studied in algorithmic game theory~\citep{papadimitriou2016nash,papadimitriou2018nash,papadimitriou2019game}. A connection between the preference graph and chain components of the replicator was demonstrated by~\cite{biggar2023replicator}, who proved that replicator sink chain components always exist. However, sink chain components have not generally been characterised.  Attractors, when they exist, are sink chain components, so our results present the first characterisation of sink chain components of zero-sum games.
\begin{lem}[Folklore, see the appendix] \label{attractors are sink chain components}
    In any flow, every attractor is a sink chain component.
\end{lem}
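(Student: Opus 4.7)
The plan is to verify the two defining properties of a sink chain component: (a) $A$ is a single chain component, i.e., all its points are mutually chain equivalent; and (b) no pseudo-orbit from $A$ reaches any point outside $A$, so $A$ is minimal in the pseudo-orbit order. Property (b) will follow from asymptotic stability of $A$, while (a) will follow from the minimality of $A$ as an attracting set.

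I would establish the sink property (b) first. Using asymptotic stability, construct a nested sequence of open neighborhoods $V_1\supseteq V_2\supseteq\cdots$ of $A$ with $\bigcap_n V_n=A$. Uniform attraction supplies, for each $n$, a time $T_n>0$ such that $\phi(\bar{V}_n,t)\subseteq V_{n+1}$ whenever $t\geq T_n$. Choose $\epsilon_n<\dist(\bar{V}_{n+1},X\setminus V_n)$. Then any $(\epsilon,T)$-chain with $\epsilon\leq \epsilon_n$ and $T\geq T_n$ that starts in $A\subseteq V_{n+1}$ remains inside $V_n$: each flow segment of duration at least $T_n$ lands in $V_{n+1}$, and each subsequent jump of size at most $\epsilon_n$ cannot leave $V_n$. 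Given any $y\notin A$, pick $n$ with $y\notin V_n$; then no $(\epsilon_n,T_n)$-chain from a point of $A$ can reach $y$, so there is no pseudo-orbit from $A$ to $y$. Hence $A$ is a sink for the pseudo-orbit reachability order.

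For (a) I would argue by contradiction. Suppose the flow restricted to the compact invariant set $A$ has two points that are not chain equivalent within $A$. Conley's fundamental theorem applied to the restricted flow then yields a proper asymptotically stable invariant set $A'\subsetneq A$ (an attractor of the restricted flow). I claim $A'$ is also asymptotically stable for the ambient flow on $X$: for any neighborhood $W$ of $A'$ in $X$, pick an open $W_0\subseteq W$ whose intersection with $A$ is a neighborhood of $A'$ in $A$ into which $A$-trajectories are uniformly attracted; then asymptotic stability of $A$ in $X$ provides a neighborhood of $A$ in $X$ whose trajectories eventually enter $W_0$ and thereafter remain inside $W$. Composing the inner and outer attraction yields uniform attraction of an $X$-neighborhood of $A'$ into $W$, so $A'$ is asymptotically stable in $X$. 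This contradicts the minimality of $A$ as an attracting set. Thus $A$ is chain transitive, and together with (b) it coincides with its own chain component, which is a sink.

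The main obstacle is step (a), in which one must promote an attractor of the restricted flow on $A$ to an attractor of the ambient flow on $X$. Composing inner attraction within $A$ with outer attraction of $A$ in $X$ has to be done carefully to preserve \emph{uniform} asymptotic stability; without uniformity the minimality of $A$ would not be contradicted. The sink property (b), by contrast, is essentially a repackaging of asymptotic stability once the nested neighborhood system is in hand.
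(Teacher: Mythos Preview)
Your proposal is correct, and for part~(b) it coincides with the paper's argument: the paper simply cites \citet{akin1984evolutionary} for the fact that pseudo-orbits cannot leave attracting sets, whereas you reprove this via nested trapping neighbourhoods, which is exactly the standard proof of that fact.

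For part~(a) the two arguments take genuinely different routes. The paper works entirely in the ambient flow on $X$: since attracting sets are closed under intersection, no attracting set can overlap $A$ properly (otherwise the intersection would be a strictly smaller attracting set inside $A$, violating minimality), so every attracting set $B$ satisfies $A\subseteq B$ or $A\subseteq B^{*}$. Conley's characterisation of chain equivalence in terms of attractor--repeller pairs then yields immediately that all points of $A$ are chain equivalent. Your route instead restricts the flow to $A$, invokes Conley's fundamental theorem on that compact invariant space to produce a proper attractor $A'\subsetneq A$ whenever chain transitivity fails, and then promotes $A'$ to an attracting set in $X$ to contradict minimality. The promotion step (``attractors of attractors are attractors'') is standard and your sketch of it is sound, but it is the one place requiring real care with uniformity, as you note. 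The paper's approach sidesteps this promotion entirely by exploiting the lattice structure of attracting sets in $X$; the price is that it leans on a slightly stronger form of Conley's theorem (chain \emph{equivalence}, not merely chain recurrence, is characterised by attractor--repeller pairs). Both arguments are of comparable depth; the paper's is shorter by citation, yours is more self-contained.
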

\begin{lem} \label{sink chain components}
    The content of the sink component of the preference graph is the unique sink chain component of a zero-sum game.
\end{lem}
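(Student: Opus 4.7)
My plan is to combine Theorem~\ref{attractor characterisation} with Lemma~\ref{attractors are sink chain components} for existence, and then use the global attraction of $\content(H)$ to rule out any competing sink chain component.

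For existence: by Theorem~\ref{attractor characterisation}, $\content(H)$ is the unique global attractor of the replicator on $M$, and by Lemma~\ref{attractors are sink chain components} every attractor is a sink chain component, so $\content(H)$ is itself a sink chain component. In particular, $\content(H)$ is a single chain component (and hence chain transitive internally), which will be useful in the uniqueness argument.

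For uniqueness, I would show that from any $x \notin \content(H)$ there is a pseudo-orbit to a preselected $y \in \content(H)$. Given $\epsilon, T > 0$, I would build the $(\epsilon,T)$-chain in three segments: (i) flow from $x$ for time $T$ to $\phi(x,T)$; (ii) perturb by at most $\epsilon$ to a fully-mixed interior point $x'$, which is always possible since $\intr(X)$ is dense in the strategy space; and (iii) use Theorem~\ref{attractor characterisation} — the trajectory through $x'$ converges to $\content(H)$ — to flow from $x'$ for some sufficiently long $t' \ge T$ so that $\phi(x', t')$ lies within $\epsilon$ of $\content(H)$, then concatenate with an $(\epsilon,T)$-chain internal to $\content(H)$ terminating at the chosen $y$, which exists because $\content(H)$ is a single chain component. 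With this pseudo-orbit in hand, suppose $C$ were another sink chain component distinct from $\content(H)$; picking any $x \in C$, the construction produces reachability from $C$ to the distinct chain component $\content(H)$, placing $\content(H)$ strictly below $C$ in the pseudo-orbit reachability order and contradicting the minimality of $C$. Hence $\content(H)$ is the unique sink chain component.

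The main obstacle I anticipate is segment (iii): terminating the chain at a fixed $y \in \content(H)$ rather than merely near $\content(H)$. This relies on internal chain transitivity of $\content(H)$ together with continuity of the flow near the endpoint, and while standard, it requires some care to assemble into a single $(\epsilon,T)$-chain uniformly for every $\epsilon, T > 0$.
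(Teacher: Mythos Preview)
Your existence argument matches the paper's exactly. For uniqueness, however, the paper takes a shorter path: it invokes the result of \cite{biggar2023replicator} that every replicator sink chain component must contain $\content(H)$; since distinct chain components are disjoint, there can be only one. Your approach instead constructs explicit pseudo-orbits from arbitrary points into $\content(H)$, exploiting the global attraction established in Theorem~\ref{attractor characterisation}. This is more self-contained---it does not lean on the external containment lemma---and the three-segment construction you outline (flow for time $T$, perturb into the dense interior, flow to the attractor, then use internal chain transitivity of $\content(H)$) is correct and standard. The concern you flag in segment (iii) is not a genuine obstacle: once you choose $x_3 \in \content(H)$ within $\epsilon$ of $\phi(x',t')$, chain transitivity of the chain component $\content(H)$ supplies an $(\epsilon,T)$-chain from $x_3$ to your fixed target $y$, and concatenation is immediate. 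The paper's route is terser by outsourcing the key step to prior work; yours recovers uniqueness directly from Theorem~\ref{attractor characterisation}.
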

Lemma~\ref{sink chain components} follows directly from Theorem~\ref{attractor characterisation} combined with Lemma~\ref{attractors are sink chain components}. The proof of both lemmas can be found in the appendix. In zero-sum games, previous state-of-the-art results on chain recurrence made use of the Nash equilibrium~\citep{piliouras2014optimization,papadimitriou2016nash,mertikopoulos2018cycles}. This line of inquiry established first (1) that all points in the essential subgame are contained within the sink chain component. %As a special case, when the essential subgame is the whole game (i.e. a fully-mixed equilibrium exists) then the sink chain component is the whole game. 
Secondly (2), if the essential subgame is not the whole game, all interior starting points converge to the essential subgame. This suggests a connection between chain recurrence and equilibria in zero-sum games. Lemma~\ref{sink chain components} comes to a different conclusion; it proves that sink chain components are characterised solely by the preference graph. Lemma~\ref{lem: preference nash} resolves this seeming discrepancy: 
chain components are determined by the preference graph, but the presence of an equilibrium in a subgame forces some structure on the induced preference graph of that subgame---in particular, it must be strongly connected and contained within the sink component of the whole game's preference graph.

\begin{lem} \label{lem: preference nash}
    In a zero-sum game, any subgame with a Nash equilibrium in its interior must (i) be contained within the sink component of the preference graph and (ii) the subgraph which this subgame induces in the preference graph must be strongly connected.
\end{lem}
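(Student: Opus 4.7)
The plan is to prove (ii) as a direct consequence of Theorem~\ref{attractor characterisation} applied to the subgame viewed in isolation, and then use (ii) together with the sink property of $H$ and the Nash equilibrium inequalities to establish (i) by contradiction.

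For (ii), I view $T_1 \times T_2$ as its own zero-sum game; its preference graph coincides with the induced subgraph of $M$'s preference graph on these profiles. The interior NE $(x^*, y^*)$ is an interior fixed point of the subgame's replicator dynamic, so it lies in the subgame's unique global attractor, which by Theorem~\ref{attractor characterisation} is $\content(H_T)$, where $H_T$ is the subgame's sink component. Since $(x^*, y^*)$ has full support on $T_1 \times T_2$, this forces $T_1 \times T_2 \subseteq H_T$, and hence $H_T = T_1 \times T_2$, i.e., the induced preference graph is strongly connected.

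For (i), I suppose for contradiction that $T_1 \times T_2 \cap H = \emptyset$ and exploit the strict form of the sink property: whenever $p \notin H$ and $h \in H$ are comparable, the arc goes $p \to h$ with strictly negative weight (a zero-weight bidirectional arc would merge $p$ into $H$'s SCC), giving a strict payoff inequality between the corresponding entries of $M$. Any $h = (h_1, h_2) \in H$ then falls into one of three cases depending on whether $h_1 \in T_1$ and $h_2 \in T_2$. In the ``mixed'' cases, say $h_1 \in T_1$ and $h_2 \notin T_2$ (the other symmetric), I would show that for every $s \in T_1$ one has $M_{s, h_2} < v_1$, where $v_1$ is the NE value: when $(s, h_2) \in H$ this follows from the 2-comparable sink arcs $(s, t) \to (s, h_2)$ for $t \in T_2$ summed against $y^*$ (using the NE identity $\sum_t M_{s, t} y^*_t = v_1$), and when $(s, h_2) \notin H$ it follows by combining this argument applied to $h$ itself with the extra 1-comparable arc $(s, h_2) \to h$. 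Summing $M_{s, h_2} < v_1$ against $x^*_s$ then contradicts the NE inequality $\sum_s M_{s, h_2} x^*_s \geq v_1$ coming from $h_2$ not being a best response for player 2.

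The hard part is the remaining case, $h_1 \notin T_1$ and $h_2 \notin T_2$, where $h$ is incomparable to every profile in $T_1 \times T_2$ and so the preference-graph arcs do not directly relate $h$ to the NE. I plan to reduce this to the mixed cases: if every $(s, h_2)$ for $s \in T_1$ and every $(h_1, t)$ for $t \in T_2$ lies outside $H$, then the sink arcs give $M_{s, h_2} < M_{h_1, h_2} < M_{h_1, t}$, which combined with the NE conditions $\sum_s M_{s, h_2} x^*_s \geq v_1$ and $\sum_t M_{h_1, t} y^*_t \leq v_1$ sandwiches $M_{h_1, h_2}$ strictly between $v_1$ and $v_1$, a contradiction; otherwise some $(s, h_2) \in H$ or $(h_1, t) \in H$ supplies a profile in $H$ of mixed type, already ruled out. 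This establishes $T_1 \times T_2 \cap H \neq \emptyset$, and combining with the strong connectivity of $T_1 \times T_2$ from (ii) and the sink property of $H$ (no outgoing arcs) forces $T_1 \times T_2 \subseteq H$.
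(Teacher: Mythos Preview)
Your argument for (ii) is essentially the paper's: restrict to the subgame, apply Theorem~\ref{attractor characterisation} there, and use that the interior Nash equilibrium is a replicator fixed point, so it must already sit in the subgame's attractor $\content(H_T)$; full support then forces $H_T = T_1\times T_2$.

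For (i) you take a genuinely different route. The paper proves (i) dynamically and by citation: it invokes Theorem~3.4 of \citet{piliouras2014optimization} (interior trajectories converge to the essential subgame) together with Theorem~\ref{attractor characterisation} to conclude that the essential subgame lies inside the attractor $\content(H)$. Your argument is instead a direct, self-contained combinatorial one: you assume $T_1\times T_2$ misses $H$, classify profiles $h\in H$ by which coordinates lie in $T_1,T_2$, and in each case combine the strict sink-arc inequalities $W_{h,q}>0$ (for $q\notin H$, $h\in H$) with the full-game Nash inequalities $\sum_s M_{s,h_2}x^*_s\ge v_1$ and $\sum_t M_{h_1,t}y^*_t\le v_1$ to reach a contradiction. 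You then finish by using (ii) plus the sink property to upgrade $T_1\times T_2\cap H\ne\emptyset$ to $T_1\times T_2\subseteq H$. This works, and it has the virtue of being almost entirely game-theoretic---fitting the paper's own remark that Lemma~\ref{lem: preference nash} ``is a purely game-theoretic result''---whereas the paper's short proof leans on an external dynamical convergence theorem. The trade-off is length: your case analysis (especially the ``hard'' case $h_1\notin T_1$, $h_2\notin T_2$) is considerably more work than the paper's two-line appeal to \citeauthor{piliouras2014optimization}. One small point to make explicit in your write-up: the inequalities $\sum_s M_{s,h_2}x^*_s\ge v_1$ and $\sum_t M_{h_1,t}y^*_t\le v_1$ require that $(x^*,y^*)$ be a Nash equilibrium of the \emph{full} game, not merely of the subgame $T_1\times T_2$; the lemma is false under the weaker reading.
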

\begin{proof}
(i) Assume the attractor is not the whole game, in which case it is on the boundary. By Theorem~\ref{attractor characterisation}, all interior points are in the basin of attraction, and by Theorem~3.4 of \cite{piliouras2014optimization}, trajectories starting from interior points converge to the essential subgame in the limit, and so the essential subgame must be within the attractor. (ii) Each subgame is independent, so we can assume we are working with the whole game and the equilibrium $x$ is fully-mixed. For contradiction, assume the preference graph is not strongly connected. Consequently, there is an attractor $A$ on the boundary of the strategy space (Theorem~\ref{attractor characterisation}). The point $x$ is in the basin of attraction of $A$, so $x$ converges to $A$, which contradicts the fact that $x$ is a Nash equilibrium, which are fixed points under the replicator~\citep{sandholm2010population}.
\end{proof}
This Lemma rephrases our understanding of equilibria and chain recurrence, highlighting the key role of the preference graph. As an example, Lemma~\ref{sink chain components} implies that sink chain component is the whole game \emph{if and only if} the preference graph is strongly connected. When a fully-mixed equilibrium exists, the preference graph must be connected, and so the sink chain component is the whole game. Most interestingly, even though we prove Lemma~\ref{lem: preference nash} using the replicator dynamic, this lemma is a \emph{purely game-theoretic result} which only relates equilibria and the preference graph. %Despite this, we do not know of any `elementary' proof (not involving the replicator). 
This connection between the preference graph and equilibria is useful for analysing games. For instance, examining Figure~\ref{fig:outer diamond} we find the Nash equilibrium is $((0,0.5,0.5),(0,0.5,0.5))$ (both players play $b$ and $c$ half the time), which has support $\{b,c\}\times \{b,c\}$. As Lemma~\ref{lem: preference nash} predicts, this is within the attractor and its induced preference graph (a 4-cycle) is strongly connected.

\section{Conclusions and Future Work}

In this paper we gave the first characterisation of the unique attractor of the replicator in zero-sum games, thereby describing the long-term behaviour of the dynamic in these games. As a secondary result, we have demonstrated the importance of the preference graph as a tool for analysing game dynamics. In particular, in the long run, only the \emph{preferences} dictates the outcome of the game. This surprising conclusion has potentially significant consequences for modelling strategic interactions. In game theory---especially evolutionary game theory---precise knowledge of utilities is difficult to achieve; our results show that modelling only the preferences for each player is sufficient to characterise the attractor of the replicator dynamic.

An important goal for future research is to characterise the attractors of the replicator in \emph{all} games, not just zero-sum ones. Currently, the attractors of the replicator have been characterised in only a few classes of games, such as potential games and $2\times n$ games without dominated strategies~\citep{biggar2023replicator}, as well as some individual games, like the Asymmetric Cyclic Matching Pennies game from \citep{kleinberg2011beyond}. Our paper adds zero-sum games to this list of solved classes.

\bibliographystyle{plainnat}
\bibliography{refs}

\begin{thebibliography}{40}
\providecommand{\natexlab}[1]{#1}
\providecommand{\url}[1]{\texttt{#1}}
\expandafter\ifx\csname urlstyle\endcsname\relax
  \providecommand{\doi}[1]{doi: #1}\else
  \providecommand{\doi}{doi: \begingroup \urlstyle{rm}\Url}\fi

\bibitem[Akin and Losert(1984)]{akin1984evolutionary}
Ethan Akin and Viktor Losert.
\newblock Evolutionary dynamics of zero-sum games.
\newblock \emph{Journal of Mathematical Biology}, 20\penalty0 (3):\penalty0
  231--258, 1984.

\bibitem[Alongi and Nelson(2007)]{alongi2007recurrence}
John~M Alongi and Gail~Susan Nelson.
\newblock \emph{Recurrence and topology}, volume~85.
\newblock American Mathematical Soc., 2007.

\bibitem[Andrade et~al.(2021)Andrade, Frongillo, and
  Piliouras]{andrade2021learning}
Gabriel~P Andrade, Rafael Frongillo, and Georgios Piliouras.
\newblock Learning in matrix games can be arbitrarily complex.
\newblock In \emph{Conference on Learning Theory}, pages 159--185. PMLR, 2021.

\bibitem[Arora et~al.(2012)Arora, Hazan, and Kale]{arora2012multiplicative}
Sanjeev Arora, Elad Hazan, and Satyen Kale.
\newblock The multiplicative weights update method: a meta-algorithm and
  applications.
\newblock \emph{Theory of computing}, 8\penalty0 (1):\penalty0 121--164, 2012.

\bibitem[Bena{\"\i}m et~al.(2012)Bena{\"\i}m, Hofbauer, and
  Sorin]{benaim2012perturbations}
Michel Bena{\"\i}m, Josef Hofbauer, and Sylvain Sorin.
\newblock Perturbations of set-valued dynamical systems, with applications to
  game theory.
\newblock \emph{Dynamic Games and Applications}, 2\penalty0 (2):\penalty0
  195--205, 2012.

\bibitem[Biggar and Shames(2023{\natexlab{a}})]{biggar2023graph}
Oliver Biggar and Iman Shames.
\newblock The graph structure of two-player games.
\newblock \emph{Scientific Reports}, 13\penalty0 (1):\penalty0 1833,
  2023{\natexlab{a}}.

\bibitem[Biggar and Shames(2023{\natexlab{b}})]{biggar2023replicator}
Oliver Biggar and Iman Shames.
\newblock The replicator dynamic, chain components and the response graph.
\newblock In \emph{International Conference on Algorithmic Learning Theory},
  pages 237--258. PMLR, 2023{\natexlab{b}}.

\bibitem[Candogan et~al.(2011)Candogan, Menache, Ozdaglar, and
  Parrilo]{candogan2011flows}
Ozan Candogan, Ishai Menache, Asuman Ozdaglar, and Pablo~A Parrilo.
\newblock Flows and decompositions of games: Harmonic and potential games.
\newblock \emph{Mathematics of Operations Research}, 36\penalty0 (3):\penalty0
  474--503, 2011.

\bibitem[Cheung and Piliouras(2019)]{cheung2019vortices}
Yun~Kuen Cheung and Georgios Piliouras.
\newblock Vortices instead of equilibria in minmax optimization: Chaos and
  butterfly effects of online learning in zero-sum games.
\newblock In \emph{Conference on Learning Theory}, pages 807--834. PMLR, 2019.

\bibitem[Cheung and Piliouras(2020)]{cheung2020chaos}
Yun~Kuen Cheung and Georgios Piliouras.
\newblock Chaos, extremism and optimism: Volume analysis of learning in games.
\newblock \emph{Advances in Neural Information Processing Systems},
  33:\penalty0 9039--9049, 2020.

\bibitem[Conley(1978)]{conley1978isolated}
Charles~C Conley.
\newblock \emph{Isolated invariant sets and the Morse index}.
\newblock Number~38. American Mathematical Soc., 1978.

\bibitem[Daskalakis et~al.(2009)Daskalakis, Goldberg, and
  Papadimitriou]{daskalakis2009complexity}
Constantinos Daskalakis, Paul~W Goldberg, and Christos~H Papadimitriou.
\newblock The complexity of computing a nash equilibrium.
\newblock \emph{SIAM Journal on Computing}, 39\penalty0 (1):\penalty0 195--259,
  2009.

\bibitem[Eshel et~al.(1983)Eshel, Akin, et~al.]{eshel1983coevolutionary}
Ilan Eshel, Ethan Akin, et~al.
\newblock Coevolutionary instability of mixed nash solutions.
\newblock \emph{Journal of Mathematical Biology}, 18\penalty0 (2):\penalty0
  123--133, 1983.

\bibitem[Freund and Schapire(1999)]{freund1999adaptive}
Yoav Freund and Robert~E Schapire.
\newblock Adaptive game playing using multiplicative weights.
\newblock \emph{Games and Economic Behavior}, 29\penalty0 (1-2):\penalty0
  79--103, 1999.

\bibitem[Gale et~al.(1950)Gale, Kuhn, and Tucker]{gale20167}
David Gale, Harold~W Kuhn, and Albert~W Tucker.
\newblock On symmetric games.
\newblock In \emph{Contributions to the Theory of Games (AM-24), Volume I},
  pages 81--88. Princeton University Press, 1950.

\bibitem[Goodfellow et~al.(2020)Goodfellow, Pouget-Abadie, Mirza, Xu,
  Warde-Farley, Ozair, Courville, and Bengio]{goodfellow2020generative}
Ian Goodfellow, Jean Pouget-Abadie, Mehdi Mirza, Bing Xu, David Warde-Farley,
  Sherjil Ozair, Aaron Courville, and Yoshua Bengio.
\newblock Generative adversarial networks.
\newblock \emph{Communications of the ACM}, 63\penalty0 (11):\penalty0
  139--144, 2020.

\bibitem[Hart and Mas-Colell(2003)]{hart2003uncoupled}
Sergiu Hart and Andreu Mas-Colell.
\newblock Uncoupled dynamics do not lead to nash equilibrium.
\newblock \emph{American Economic Review}, 93\penalty0 (5):\penalty0
  1830--1836, 2003.

\bibitem[Hofbauer(1996)]{hofbauer1996evolutionary}
Josef Hofbauer.
\newblock Evolutionary dynamics for bimatrix games: A hamiltonian system?
\newblock \emph{Journal of Mathematical Biology}, 34\penalty0 (5):\penalty0
  675--688, 1996.

\bibitem[Hofbauer and Sigmund(1998)]{hofbauer1998evolutionary}
Josef Hofbauer and Karl Sigmund.
\newblock \emph{Evolutionary games and population dynamics}.
\newblock Cambridge university press, 1998.

\bibitem[Hofbauer and Sigmund(2003)]{hofbauer2003evolutionary}
Josef Hofbauer and Karl Sigmund.
\newblock Evolutionary game dynamics.
\newblock \emph{Bulletin of the American mathematical society}, 40\penalty0
  (4):\penalty0 479--519, 2003.

\bibitem[Hofbauer et~al.(2009)Hofbauer, Sorin, and Viossat]{hofbauer2009time}
Josef Hofbauer, Sylvain Sorin, and Yannick Viossat.
\newblock Time average replicator and best-reply dynamics.
\newblock \emph{Mathematics of Operations Research}, 34\penalty0 (2):\penalty0
  263--269, 2009.

\bibitem[Kalies et~al.(2021)Kalies, Mischaikow, and
  Vandervorst]{kalies2021lattice}
William~D Kalies, Konstantin Mischaikow, and Robert~CAM Vandervorst.
\newblock Lattice structures for attractors iii.
\newblock \emph{Journal of Dynamics and Differential Equations}, pages 1--40,
  2021.

\bibitem[Kleinberg et~al.(2011)Kleinberg, Ligett, Piliouras, and
  Tardos]{kleinberg2011beyond}
Robert~D Kleinberg, Katrina Ligett, Georgios Piliouras, and {\'E}va Tardos.
\newblock Beyond the nash equilibrium barrier.
\newblock In \emph{ICS}, pages 125--140, 2011.

\bibitem[Mertikopoulos et~al.(2018)Mertikopoulos, Papadimitriou, and
  Piliouras]{mertikopoulos2018cycles}
Panayotis Mertikopoulos, Christos Papadimitriou, and Georgios Piliouras.
\newblock Cycles in adversarial regularized learning.
\newblock In \emph{Proceedings of the Twenty-Ninth Annual ACM-SIAM Symposium on
  Discrete Algorithms}, pages 2703--2717. SIAM, 2018.

\bibitem[Omidshafiei et~al.(2019)Omidshafiei, Papadimitriou, Piliouras, Tuyls,
  Rowland, Lespiau, Czarnecki, Lanctot, Perolat, and
  Munos]{omidshafiei2019alpha}
Shayegan Omidshafiei, Christos Papadimitriou, Georgios Piliouras, Karl Tuyls,
  Mark Rowland, Jean-Baptiste Lespiau, Wojciech~M Czarnecki, Marc Lanctot,
  Julien Perolat, and Remi Munos.
\newblock $\alpha$-rank: Multi-agent evaluation by evolution.
\newblock \emph{Scientific reports}, 9\penalty0 (1):\penalty0 1--29, 2019.

\bibitem[Papadimitriou and Piliouras(2016)]{papadimitriou2016nash}
Christos Papadimitriou and Georgios Piliouras.
\newblock From nash equilibria to chain recurrent sets: Solution concepts and
  topology.
\newblock In \emph{Proceedings of the 2016 ACM Conference on Innovations in
  Theoretical Computer Science}, pages 227--235, 2016.

\bibitem[Papadimitriou and Piliouras(2018)]{papadimitriou2018nash}
Christos Papadimitriou and Georgios Piliouras.
\newblock From nash equilibria to chain recurrent sets: An algorithmic solution
  concept for game theory.
\newblock \emph{Entropy}, 20\penalty0 (10):\penalty0 782, 2018.

\bibitem[Papadimitriou and Piliouras(2019)]{papadimitriou2019game}
Christos Papadimitriou and Georgios Piliouras.
\newblock Game dynamics as the meaning of a game.
\newblock \emph{ACM SIGecom Exchanges}, 16\penalty0 (2):\penalty0 53--63, 2019.

\bibitem[Piliouras and Shamma(2014)]{piliouras2014optimization}
Georgios Piliouras and Jeff~S Shamma.
\newblock Optimization despite chaos: Convex relaxations to complex limit sets
  via poincar{\'e} recurrence.
\newblock In \emph{Proceedings of the twenty-fifth annual ACM-SIAM Symposium on
  Discrete Algorithms}, pages 861--873. SIAM, 2014.

\bibitem[Ritzberger and Weibull(1995)]{ritzberger1995evolutionary}
Klaus Ritzberger and J{\"o}rgen~W Weibull.
\newblock Evolutionary selection in normal-form games.
\newblock \emph{Econometrica: Journal of the Econometric Society}, pages
  1371--1399, 1995.

\bibitem[Sandholm(2010)]{sandholm2010population}
William~H Sandholm.
\newblock \emph{Population games and evolutionary dynamics}.
\newblock MIT press, 2010.

\bibitem[Sato et~al.(2002)Sato, Akiyama, and Farmer]{sato2002chaos}
Yuzuru Sato, Eizo Akiyama, and J~Doyne Farmer.
\newblock Chaos in learning a simple two-person game.
\newblock \emph{Proceedings of the National Academy of Sciences}, 99\penalty0
  (7):\penalty0 4748--4751, 2002.

\bibitem[Silver et~al.(2016)Silver, Huang, Maddison, Guez, Sifre, Van
  Den~Driessche, Schrittwieser, Antonoglou, Panneershelvam, Lanctot,
  et~al.]{silver2016mastering}
David Silver, Aja Huang, Chris~J Maddison, Arthur Guez, Laurent Sifre, George
  Van Den~Driessche, Julian Schrittwieser, Ioannis Antonoglou, Veda
  Panneershelvam, Marc Lanctot, et~al.
\newblock Mastering the game of go with deep neural networks and tree search.
\newblock \emph{Nature}, 529\penalty0 (7587):\penalty0 484--489, 2016.

\bibitem[Silver et~al.(2018)Silver, Hubert, Schrittwieser, Antonoglou, Lai,
  Guez, Lanctot, Sifre, Kumaran, Graepel, et~al.]{silver2018general}
David Silver, Thomas Hubert, Julian Schrittwieser, Ioannis Antonoglou, Matthew
  Lai, Arthur Guez, Marc Lanctot, Laurent Sifre, Dharshan Kumaran, Thore
  Graepel, et~al.
\newblock A general reinforcement learning algorithm that masters chess, shogi,
  and go through self-play.
\newblock \emph{Science}, 362\penalty0 (6419):\penalty0 1140--1144, 2018.

\bibitem[Smith and Price(1973)]{smith1973logic}
J~Maynard Smith and George~R Price.
\newblock The logic of animal conflict.
\newblock \emph{Nature}, 246\penalty0 (5427):\penalty0 15--18, 1973.

\bibitem[Strogatz(2018)]{strogatz2018nonlinear}
Steven~H Strogatz.
\newblock \emph{Nonlinear dynamics and chaos: with applications to physics,
  biology, chemistry, and engineering}.
\newblock CRC press, 2018.

\bibitem[Taylor and Jonker(1978)]{taylor1978evolutionary}
Peter~D Taylor and Leo~B Jonker.
\newblock Evolutionary stable strategies and game dynamics.
\newblock \emph{Mathematical biosciences}, 40\penalty0 (1-2):\penalty0
  145--156, 1978.

\bibitem[Vlatakis-Gkaragkounis et~al.(2020)Vlatakis-Gkaragkounis, Flokas,
  Lianeas, Mertikopoulos, and Piliouras]{vlatakis2020no}
Emmanouil-Vasileios Vlatakis-Gkaragkounis, Lampros Flokas, Thanasis Lianeas,
  Panayotis Mertikopoulos, and Georgios Piliouras.
\newblock No-regret learning and mixed nash equilibria: They do not mix.
\newblock \emph{Advances in Neural Information Processing Systems},
  33:\penalty0 1380--1391, 2020.

\bibitem[Von~Neumann and Morgenstern(1944)]{von2007theory}
John Von~Neumann and Oskar Morgenstern.
\newblock \emph{Theory of games and economic behavior}.
\newblock Princeton university press, 1944.

\bibitem[Zeeman(1980)]{zeeman1980population}
E~Christopher Zeeman.
\newblock Population dynamics from game theory.
\newblock In \emph{Global theory of dynamical systems}, pages 471--497.
  Springer, 1980.

\end{thebibliography}

\appendix

\section{Proofs}

\begin{lem}[Lemma~\ref{lem: uniqueness}]
    The preference graph of a zero-sum game has a unique sink component.
\end{lem}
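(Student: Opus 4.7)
The plan is to argue by contradiction: assume the preference graph has two distinct sink components $C_1$ and $C_2$, pick profiles $p \in C_1$ and $q \in C_2$, and derive an impossibility. I would split on whether $p$ and $q$ are comparable. In the symmetric case, every pair of profiles is comparable (the preference graph is a tournament), so only the comparable sub-case arises: the arc (or pair of arcs) between $p$ and $q$ necessarily leaves one of the two sinks, which is the contradiction. This recovers the classical fact that the condensation of a tournament is a transitive tournament and hence has a unique minimal element.

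For the non-symmetric case, if $p$ and $q$ happen to be comparable, the same one-line argument finishes it. Otherwise, $p = (p_1, p_2)$ and $q = (q_1, q_2)$ differ in both coordinates, and I would introduce the two ``corner'' profiles $a := (p_1, q_2)$ and $b := (q_1, p_2)$, each of which is $2$-comparable with one of $p,q$ and $1$-comparable with the other. I would first show that $a, b \notin C_1 \cup C_2$: membership of $a$ in either sink would force the arc between $a$ and the element of the \emph{other} sink to leave one of $C_1$ or $C_2$, and symmetrically for $b$.

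With $a,b$ outside both sinks, the sink property of $C_1$ and $C_2$ forbids the arcs $p \to a$, $p \to b$, $q \to a$, $q \to b$; equivalently,
\[
W_{p,a} > 0, \quad W_{p,b} > 0, \quad W_{q,a} > 0, \quad W_{q,b} > 0,
\]
where the inequalities are strict because a zero weight would produce a pair of opposing arcs and hence an outgoing arc from one of the sinks. Unpacking these using the definition of $W$ in terms of $M$ yields the telescoping chain
\[
M_{p_1, q_2} > M_{p_1, p_2} > M_{q_1, p_2} > M_{q_1, q_2} > M_{p_1, q_2},
\]
which closes back on itself and is absurd.

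The main obstacle I anticipate is purely bookkeeping: keeping track, for each pair of profiles, of which coordinate is shared and therefore which sign convention in the definition of $W$ applies; the combinatorics lives on a single $2\times 2$ sub-matrix of $M$, but it is easy to mis-index. The subtle point worth emphasising in the write-up is that the sink property yields \emph{strict} inequalities, which is what lets the closed chain of four inequalities produce an immediate contradiction rather than merely pinning the four payoffs equal.
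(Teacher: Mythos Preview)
Your argument is correct. The symmetric case matches the paper exactly: the preference graph is a tournament, and any two nodes in distinct sink components are joined by an arc that must leave one of them.

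For the non-symmetric case, however, the paper does not argue directly; it simply invokes Theorem~4.10 of \citet{biggar2023graph}, which establishes the result for response graphs of zero-sum games. Your route is genuinely different: you give a self-contained combinatorial proof by restricting attention to the $2\times 2$ sub-matrix on rows $\{p_1,q_1\}$ and columns $\{p_2,q_2\}$ and reading off a cyclic chain of strict payoff inequalities. The bookkeeping you flag is indeed the only delicate point, and your chain
\[
M_{p_1,q_2} > M_{p_1,p_2} > M_{q_1,p_2} > M_{q_1,q_2} > M_{p_1,q_2}
\]
checks out against the sign conventions in the definition of $W$. One small simplification: once you know the arc $p\to a$ is forbidden, the definition of the preference graph gives $W_{p,a}>0$ directly (the negation of $W_{p,a}\le 0$), so you do not actually need the separate argument about zero weights producing paired arcs. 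What your approach buys is a fully elementary, paper-internal proof that makes the role of the zero-sum hypothesis transparent (it is exactly what allows all four inequalities to be read from a single matrix $M$); what the paper's citation buys is brevity and a pointer to the more general structural theory of response graphs.
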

\begin{proof}
    In the non-symmetric case, the preference graph is the response graph, and the result follows from Theorem~4.10 of \cite{biggar2023graph}. In the symmetric case, the preference graph is a tournament, and all tournaments have one sink component, as they are orientations of complete graphs.
\end{proof}

\begin{lem}[Lemma~\ref{lem: lower bound}]
    If $H$ is the sink component of the preference graph of a (symmetric or non-symmetric) zero-sum game $M$, then $\content(H)$ is invariant under the replicator. Further, for any attracting set $A$, $\content(H)\subseteq A$.
\end{lem}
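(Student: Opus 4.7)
The plan is to handle the two claims separately: invariance follows directly from the replicator equations, while minimality reduces to cited prior work. For invariance, I would first rewrite $\content(H)$ as a union of faces of the strategy simplex: in the non-symmetric case, $\content(H) = \bigcup\{\Delta(T_1)\times\Delta(T_2) : T_1\times T_2 \subseteq H\}$, and in the symmetric case, $\content(H) = \bigcup\{\Delta(T) : T\subseteq H\}$. Inspecting Definitions~\ref{def: nonsymmetric replicator} and~\ref{def: symmetric replicator}, each $\dot x_s$ carries a factor of $x_s$, so strategies at zero remain at zero along trajectories, and strategies that are strictly positive remain strictly positive; equivalently, the replicator preserves the support of every mixed profile, making each face of the simplex invariant. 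Hence each face appearing in the decomposition of $\content(H)$ is invariant under the flow, and so is their union.

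For minimality, I would invoke Theorem~5.2 of \cite{biggar2023replicator}, which establishes that every attracting set of the replicator in a finite game contains the content of at least one sink component of its preference graph. Combined with Lemma~\ref{lem: uniqueness}, which guarantees that a zero-sum game has a \emph{unique} sink component $H$, this forces $\content(H)\subseteq A$ for every attracting set $A$. Note that the lemma is stated for attracting sets rather than attractors, but this is not an issue: every attracting set contains an attractor by compactness, and the attractor in turn contains the content of the (unique) sink component.

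In this streamlined form the main obstacle is essentially absent: the lemma is a one-step consequence of existing results, and the real substance is hidden inside the cited theorem. That theorem's proof uses the local behaviour of the replicator at pure profiles together with the directed arcs of the preference graph as indicators of payoff gradients, to argue that no asymptotically stable set can omit a pure profile belonging to a sink component. Once pure-profile containment in $A$ is established, the rectangular face structure of $\content(H)$ and the invariance of each such face lift the containment to the full content. If one wanted a fully self-contained proof, reproducing this pure-profile containment argument would be the genuinely nontrivial step.
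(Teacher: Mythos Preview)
Your approach is essentially the same as the paper's: decompose $\content(H)$ as a union of subgames (faces) for invariance, and rely on Theorem~5.2 of \cite{biggar2023replicator} together with uniqueness of the sink component for minimality. The bridge from attractors to attracting sets via ``every attracting set contains an attractor'' is correct and matches the paper's logic.

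There is one genuine omission. Theorem~5.2 of \cite{biggar2023replicator} is stated for the \emph{multi-population} replicator and the response graph; the paper explicitly notes that its appendix proof is ``largely the same as Theorem~5.2 \dots\ with the addition of the symmetric case.'' In the symmetric (single-population) setting the preference graph is a tournament on strategies, the replicator equation is Definition~\ref{def: symmetric replicator} rather than Definition~\ref{def: nonsymmetric replicator}, and the volume-conservation facts one needs are the symmetric analogues (Theorems~9.1.2 and~9.1.6 of \cite{sandholm2010population}). You cannot simply invoke the cited theorem for the symmetric case without checking that each step carries over; the paper does this explicitly, and you should too.

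A second, smaller point: your informal description of how the cited theorem lifts pure-profile containment to all of $\content(H)$ is not quite right. Invariance of faces is not enough---a face can be invariant and still lie partly outside an attracting set. The actual argument (which the paper reproduces) uses the zero-sum property that no \emph{repelling} set can sit in the interior of any subgame: once the boundary of a subgame $Y$ is in $A$, the dual repellor of $A\cap Y$ would have to be interior to $Y$, which is impossible. Since you are citing rather than reproving, this imprecision is harmless for the proof, but your sketch of ``what the cited theorem does'' misidentifies the mechanism.
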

\begin{proof}
    This proof is largely the same as Theorem~5.2 of~\cite{biggar2023replicator}, with the addition of the symmetric case.
    (\emph{Invariance}:) Observe that if $x\in \content(H)$, then the support of $x$ is contained in $H$, and because all mixed profiles in the subgame $\Delta(\supp(x))$ have the same support, $\Delta(\supp(x))\subseteq\content(H)$. It follows that $\content(H)$ is a union of subgames. By Theorem~5.4.7 of \cite{sandholm2010population}, all subgames are invariant sets under the replicator, and unions of invariant sets are invariant.

    ($\content(H)\subseteq A$:) By Theorems~9.1.2 and 9.1.6 of \cite{sandholm2010population}, no asymptotically stable set can exist in the interior of the strategy space of a symmetric or non-symmetric zero-sum game. Subgames have the same properties as the whole game under the replicator, so the same is true of all subgames. Dually, no repelling set can exist in the interior of any subgame.

    (\emph{Claim}: every attracting set contains a profile.)
    This follows by induction, using the fact that the replicator dynamic on a subgame has the same properties as on the whole game. In the whole game, an asymptotically stable set intersects the boundary. This intersection with the boundary must also be asymptotically stable in any subgame it intersects on the boundary, and so it intersects the boundary of this smaller subgame, and so on. We conclude that such a set contains a pure profile, the smallest possible subgame. This claim generalises to other dynamics---see \cite{vlatakis2020no}, Theorem~4.5.

    (\emph{Claim}: every attracting set contains all profiles in $H$.)
    An arc $\arc{p}{q}$ of the preference graph is also a subgame, where only the profiles $p$ and $q$ are in the support. The (symmetric or non-symmetric) replicator reduces to $\dot x_p = x_p(1-x_p) W_{q,p}$ on this subgame, where $W_{q,p} \geq 0$ (for the player for which these profiles are comparable). If $p$ is contained in an asymptotically stable set, then $ q$ must also be contained in this set, because points near $p$ move to $q$ along this arc. We know that asymptotically stable sets contain a pure profile---by this argument we deduce that they contain all pure profiles reachable from that one in the preference graph. Such a set of profiles always contains the sink component $H$.

    (\emph{Claim}: every attracting set contains $\content(H)$.)
    Let $Y$ be a subgame, where the pure profiles in $Y$ are in an attracting set $A$. If $Y$ is a pure profile, then all mixed profiles in $Y$ are in the set, trivially. Now suppose for induction that all points on the boundary of $Y$ are in $A$. Suppose for contradiction that there is a point $x\in\intr(Y)$ that is not in $A$. The set $A\cap Y$ is attracting in $Y$, and the boundary is contained in $A\cap Y$, but this means that the dual repelling set of $A\cap Y$ is contained in the interior of $Y$, but no such sets can be contained in the interior. Hence all of $Y$ is within $A$. By induction on subgames, we find that all of $\content(H)$ is within every attracting set.
\end{proof}

\begin{lem} \label{lem: antisymmetry}
$\sym_M$ is anti-symmetric.
\end{lem}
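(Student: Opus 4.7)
The plan is to verify anti-symmetry directly from the definition of $\sym_M$. Anti-symmetry of a square matrix $N$ indexed by profiles means $N_{p,q} = -N_{q,p}$ for every pair of profiles $p,q$, which is equivalent to $N_{p,q} + N_{q,p} = 0$. Since $\sym_M$ is defined by an explicit formula in terms of entries of $M$, I expect this to reduce to a one-line calculation rather than requiring any structural argument.

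The key step is: fix arbitrary profiles $p = (p_1,p_2)$ and $q = (q_1,q_2)$, and write out both $(\sym_M)_{p,q}$ and $(\sym_M)_{q,p}$ using Definition~\ref{def:symised game}. We have $(\sym_M)_{p,q} = M_{p_1,q_2} - M_{q_1,p_2}$, while swapping the roles of $p$ and $q$ gives $(\sym_M)_{q,p} = M_{q_1,p_2} - M_{p_1,q_2}$. Adding these two expressions yields zero, so $(\sym_M)_{p,q} = -(\sym_M)_{q,p}$, establishing anti-symmetry.

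There is no real obstacle here; the statement is essentially a sanity check that the defining formula is manifestly skew under swapping $p$ and $q$. The only care needed is to track which coordinates of the original pair indices land in the first versus second argument of $M$, since $\sym_M$ is indexed by \emph{profiles} whereas $M$ is indexed by \emph{strategies} of two different players. Once the indices are written out, the cancellation is immediate.
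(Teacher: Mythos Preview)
Your proposal is correct and matches the paper's proof essentially verbatim: both simply expand $(\sym_M)_{p,q}$ and $(\sym_M)_{q,p}$ from Definition~\ref{def:symised game} and observe they are negatives of one another.
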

\begin{proof}
For $p = (p_1,p_2)$ and $q = (q_1,q_2)$, $(\sym_M)_{p,q} = M_{p_1,q_2} - M_{q_1,p_2} = -(M_{q_1,p_2} - M_{p_1,q_2}) = -(\sym_M)_{q,p}$.
\end{proof}
\begin{lem}[Lemma~\ref{lemma: symmetry weights}]
Let $p = (p_1,p_2)$ and $q = (q_1,q_2)$ be profiles. Then:
\[ 
(\sym_M)_{p,q} = W_{p,(p_1,q_2)} + W_{p,(q_1,p_2)} = W_{(p_1,q_2),q} + W_{(q_1,p_2),q}.
\]
\end{lem}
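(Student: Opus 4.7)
The plan is a direct unfolding of the definitions of $W$ and $\sym_M$, with the only subtlety being to correctly identify, for each $W$ term, whether its two arguments are $1$-comparable or $2$-comparable so as to pick the right branch of the piecewise definition. Both profiles $(p_1,q_2)$ and $(q_1,p_2)$ are obtained from $p=(p_1,p_2)$ by changing exactly one coordinate and from $q=(q_1,q_2)$ by changing exactly one coordinate, so all four of the $W$ expressions appearing in the claim are well-defined.

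First I would compute the left-hand equality. The profile $(p_1,q_2)$ shares its first coordinate with $p$, so $p$ and $(p_1,q_2)$ are $2$-comparable and the second branch of the definition of $W$ gives $W_{p,(p_1,q_2)} = M_{p_1,q_2} - M_{p_1,p_2}$. The profile $(q_1,p_2)$ shares its second coordinate with $p$, so $p$ and $(q_1,p_2)$ are $1$-comparable and the first branch gives $W_{p,(q_1,p_2)} = M_{p_1,p_2} - M_{q_1,p_2}$. Summing these two expressions, the $\pm M_{p_1,p_2}$ terms cancel and we are left with $M_{p_1,q_2} - M_{q_1,p_2}$, which is exactly $(\sym_M)_{p,q}$ by Definition~\ref{def:symised game}.

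Next I would handle the right-hand equality symmetrically. Now the comparisons are with $q=(q_1,q_2)$: the pair $(p_1,q_2),q$ shares the second coordinate and is $1$-comparable, yielding $W_{(p_1,q_2),q} = M_{p_1,q_2} - M_{q_1,q_2}$; the pair $(q_1,p_2),q$ shares the first coordinate and is $2$-comparable, yielding $W_{(q_1,p_2),q} = M_{q_1,q_2} - M_{q_1,p_2}$. Their sum telescopes on $\pm M_{q_1,q_2}$ to give $M_{p_1,q_2} - M_{q_1,p_2} = (\sym_M)_{p,q}$ again, completing the chain of equalities.

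There is essentially no obstacle here; the only place one can slip is in matching the slots of the piecewise definition of $W_{p,q}$ to the correct pair of profiles (in particular, remembering that $W_{p,q} = -W_{q,p}$ means the sign depends on which profile is written first). Once the case analysis is laid out as above, the identity reduces to two two-term cancellations.
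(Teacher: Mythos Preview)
Your proof is correct and essentially identical to the paper's own proof: both unfold the definitions of $W$ and $\sym_M$, identify which branch of the piecewise definition of $W$ applies in each of the four terms, and observe the telescoping cancellations of $\pm M_{p_1,p_2}$ and $\pm M_{q_1,q_2}$ respectively. The only difference is that you spell out the comparability reasoning more explicitly, whereas the paper simply writes the resulting expressions.
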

\begin{proof}
    \begin{align*}
        W_{p,(p_1,q_2)} + W_{p,(q_1,p_2)} &= (M_{p_1,q_2}-M_{p_1,p_2}) + (M_{p_1,p_2} - M_{q_1,p_2}) \\
        &= M_{p_1,q_2} - M_{q_1,p_2} = (\sym_M)_{p,q} \quad \text{and}\\
        W_{(p_1,q_2),q} + W_{(q_1,p_2),q} &= (M_{p_1,q_2} -M_{q_1,q_2}) + (M_{q_1,q_2} - M_{q_1,p_2}) \\
        &= M_{p_1,q_2} - M_{q_1,p_2} = (\sym_M)_{p,q}
    \end{align*}
\end{proof}

\begin{thm}[Theorem~\ref{thm: symmetrisation}]
Let $M$ be a non-symmetric zero-sum game. Let $x = (x_1,x_2)$ be a mixed profile and $p = (p_1,p_2)$ a pure profile. Write $x_p := {x_1}_{p_1} {x_2}_{p_2}$ as in equation~\eqref{x sub p}. Then, under the non-symmetric replicator (Definition~\ref{def: nonsymmetric replicator}),
\[ \dot x_p = x_p (\sym_M x)_p. \]
\end{thm}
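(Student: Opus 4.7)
The plan is to apply the product rule directly to $x_p = {x_1}_{p_1} {x_2}_{p_2}$ and then unpack both sides of the desired identity until they visibly coincide. Writing $p=(p_1,p_2)$, the product rule gives
\[
\dot x_p \;=\; \dot{x_1}_{p_1}\,{x_2}_{p_2} + {x_1}_{p_1}\,\dot{x_2}_{p_2}.
\]
The first step is to substitute the non-symmetric replicator (Definition~\ref{def: nonsymmetric replicator}) into each factor. The two ``average payoff'' terms $x_1^T M x_2$ appear with opposite signs (because player~2 is a minimiser in a zero-sum game), so after factoring out $x_p = {x_1}_{p_1}{x_2}_{p_2}$ they cancel and we are left with
\[
\dot x_p \;=\; x_p\bigl[(M x_2)_{p_1} - (M^T x_1)_{p_2}\bigr].
\]
This cancellation is the only point in the argument that uses the zero-sum hypothesis, and it is the small conceptual kernel of the theorem: the two players' contributions to the drift of a profile combine into a single difference of marginal payoffs.

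The second step is to expand $(\sym_M x)_p$ using Definition~\ref{def:symised game} and the product form $x_q = {x_1}_{q_1}{x_2}_{q_2}$:
\[
(\sym_M x)_p \;=\; \sum_{q_1,q_2}\bigl(M_{p_1,q_2} - M_{q_1,p_2}\bigr){x_1}_{q_1}{x_2}_{q_2}.
\]
Splitting the sum and using $\sum_{q_1}{x_1}_{q_1} = \sum_{q_2}{x_2}_{q_2} = 1$, the first piece collapses to $\sum_{q_2} M_{p_1,q_2}{x_2}_{q_2} = (Mx_2)_{p_1}$ and the second to $\sum_{q_1} M_{q_1,p_2}{x_1}_{q_1} = (M^T x_1)_{p_2}$. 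Multiplying by $x_p$ gives exactly the expression derived in the first step, completing the identification.

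There is no genuine obstacle here; the theorem is an algebraic identity and the entire argument is two applications of the product rule and one rearrangement of a double sum. The only subtlety worth flagging is that the cancellation of the average-payoff terms really does rely on the zero-sum structure built into Definition~\ref{def: nonsymmetric replicator} (the minus sign on $\dot{x_2}_t$), and that the reduction of the double sum to marginals relies on $x$ being a \emph{product} distribution—this is why the theorem is phrased as an embedding on the product-distribution subspace rather than as an equivalence of dynamics on all of $\Delta(S_1\times S_2)$.
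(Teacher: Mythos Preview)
Your proof is correct and is essentially the same computation as the paper's: both apply the product rule to $x_p={x_1}_{p_1}{x_2}_{p_2}$, substitute Definition~\ref{def: nonsymmetric replicator}, and use the zero-sum sign to cancel the average-payoff term before recognising the definition of $\sym_M$. The only cosmetic difference is that the paper expands everything in profile coordinates and cancels the intermediate $M_{q_1,q_2}$ terms directly, whereas you work ``meet in the middle'' by reducing both $\dot x_p$ and $x_p(\sym_M x)_p$ to the common expression $x_p\bigl[(Mx_2)_{p_1}-(M^Tx_1)_{p_2}\bigr]$.
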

\begin{proof}
The two-population replicator dynamic (written for player 1, the player 2 case is similar) is equivalent to
\begin{align*}
    \dot {x_1}_s &= {x_1}_s ((M x_2)_s - x_1^T M x_2) && (\text{Definition~\ref{def: nonsymmetric replicator}})\\
    \dot {x_1}_s/{x_1}_s &= \sum_{t\in S_1} {x_1}_t ((M x_2)_s - (M x_2)_t) && (\text{as}\ \sum_{t\in S_1} {x_1}_t = 1)\\
    &= \sum_{t \in S_1} {x_1}_t \sum_{r\in S_2}  {x_2}_r \left (M_{s,r} - M_{t,r} \right ) \\
    &= \sum_{t\in S_1} \sum_{r \in S_2} {x_1}_t {x_2}_r \left (M_{s,r} - M_{t,r} \right ) \\
    &= \sum_{p=(p_1,p_2)\in S_1\times S_2} x_p \left (M_{s,p_2} - M_{p_1,p_2} \right ) && (\text{relabelling})
\end{align*}

Now we observe that for $p = (p_1,p_2)$,
\begin{align*}
    \dot x_p &= \ddt ({x_1}_{p_1}{x_2}_{p_2}) \\
    &= ({x_1}_{p_1}{x_2}_{p_2}) (\frac{\dot {x_1}_{p_1}}{{x_1}_{p_1}} + \frac{\dot {x_2}_{p_2}}{{x_2}_{p_2}}) && \text{(product rule)} \\
    &= x_p \Bigg(\sum_{q\in S_1\times S_2} x_q (M_{p_1,q_2} - M_{q_1,q_2}) \\
    &\qquad\ + \sum_{q\in S_1\times S_2} x_q (M_{q_1,q_2} -M_{q_1,p_2})\Bigg ) && \text{(by above)}\\
    &= x_p \sum_{q\in S_1\times S_2} x_q \left (M_{p_1,q_2} - M_{q_1,q_2} + M_{q_1,q_2} - M_{q_1,p_2} \right )\\
    %&= x_p \sum_{q\in S_1\times S_2} x_q \left (M_{q_1,p_2} - M_{p_1,q_2} \right ) \\
    &=  x_p \sum_{q\in S_1\times S_2} x_q (\sym_M)_{p,q} && \text{(Definition~\ref{def:symised game})}\\
    &=  x_p (\sym_M x)_p.
\end{align*}
\end{proof}

\begin{lem}[Lemma~\ref{attractors are sink chain components}]
    In any flow, every attractor is a sink chain component.
\end{lem}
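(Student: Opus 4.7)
The plan is to show that the attractor $A$ of a flow $\phi$ on a compact metric space coincides with a single chain component which is moreover a sink. I would break this into three claims: (a) no pseudo-orbit starting in $A$ can leave $A$ (the \emph{sink} property); (b) any two points of $A$ are chain equivalent (\emph{chain transitivity}); and (c) consequently, the chain component containing any $x \in A$ is exactly $A$.

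The central tool is the existence, for any asymptotically stable compact invariant set, of arbitrarily small forward-invariant open neighbourhoods $U \supset A$ (\emph{trapping regions}) such that $\phi(U, t) \subset B_\eta(A)$ for all $t \geq T_\eta$. For (a), fix $\eta > 0$, choose such a $U \subset B_\eta(A)$, and pick $\epsilon < \dist(A, \partial U)$. Any $(\epsilon, T)$-chain $x_1, \dots, x_n$ with $x_1 \in A$ remains in $U$: each flow segment preserves $U$ (by invariance of $A$ and forward invariance of $U$), and each jump of length $< \epsilon$ stays inside $U$. Taking $T \geq T_\eta$ forces every $x_i$ with $i \geq 2$ into $B_{\eta + \epsilon}(A)$. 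Letting $\eta, \epsilon \to 0$, any point $y$ reachable from $A$ by pseudo-orbits must lie in $A$ (since $A$ is closed), yielding the sink property.

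For (b), I would first observe that $A$ is connected: if $A$ split into two disjoint compact pieces, separating open sets inside a trapping region would produce a strictly smaller attracting set, contradicting minimality of $A$. Given $x, y \in A$ and $\epsilon, T > 0$, connectedness and compactness allow me to pick a finite sequence $a_0 = x, a_1, \dots, a_n = y$ in $A$ with $\dist(a_i, a_{i+1}) < \epsilon/3$. I interpolate between consecutive $a_i$ by flowing forward from $a_i$ for time at least $T$ (remaining in $A$ by invariance) and then jumping by less than $\epsilon$ toward $a_{i+1}$; uniform attraction inside $U$ guarantees that repeated short hops can be steered arbitrarily close to $a_{i+1}$. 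Concatenation yields an $(\epsilon, T)$-chain from $x$ to $y$. Claim (c) then follows: by (b), $A$ lies inside a single chain component; by (a), that chain component cannot escape $A$; hence it equals $A$, and is a sink.

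The main obstacle is the interpolation step in (b). A single forward trajectory inside $A$ can have a tiny $\omega$-limit set, so one cannot simply follow one orbit to reach a prescribed waypoint $a_{i+1}$; instead one exploits $U$ to perturb slightly off $A$ and let the attracting dynamics relax back, aiming at successive waypoints. This is technical but a standard ingredient of Conley index theory; a careful treatment appears in \citet{alongi2007recurrence}.
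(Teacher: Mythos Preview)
Your overall decomposition into (a) sink property, (b) chain transitivity, and (c) identification of the chain component is sound, and step (a) is correct as written. However, the paper takes a quite different and shorter route: rather than building $(\epsilon,T)$-chains by hand, it invokes Conley's characterisation of chain recurrence---a point is chain recurrent if and only if, for every attracting set $B$, it lies in $B$ or in the dual repeller $B^*$. Since attracting sets are closed under intersection, minimality of $A$ forces every attracting set either to contain $A$ or to be disjoint from it; thus all of $A$ lies on the same side of every attractor--repeller pair, which simultaneously yields chain recurrence and chain equivalence of all points in $A$. The sink property is then handled exactly as in your (a). This abstract argument bypasses the interpolation problem entirely.

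That interpolation is where your proposal has a real gap. The sentence ``flow forward from $a_i$ for time at least $T$ and then jump by less than $\epsilon$ toward $a_{i+1}$'' does not work: $\phi(a_i,T)$ need not be anywhere near $a_{i+1}$, and uniform attraction to $A$ gives no control over \emph{which} point of $A$ a perturbed trajectory approaches, so the ``steer toward the next waypoint by relaxing back'' mechanism you describe cannot be made to succeed as stated. You are right that chain transitivity of attractors is a standard fact in \citet{alongi2007recurrence}, but the usual direct proof goes via the observation that $A$ is the $\omega$-limit set of a trapping neighbourhood together with the general theorem that $\omega$-limit sets are internally chain transitive---not via waypoint steering. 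As written, your step (b) defers the only hard part to a citation while sketching an argument that would not actually close the gap; the paper's Conley-theoretic approach is both cleaner and complete.
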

\begin{proof}
    We first show that all points in the attractor are chain recurrent. Attracting sets are closed under intersection~\citep{kalies2021lattice,biggar2023replicator}, so an attractor cannot overlap any other attracting set---that would contradict the minimality of the attractor. \citet{conley1978isolated} showed that points are chain recurrent if, for each attracting set $A$, the point is contained in either $A$ or its dual repelling set $A^*$. The attractor is compact, invariant, and no attracting set overlaps it, so an attracting or repelling set must contain all points in the attractor. Hence all points are chain recurrent.

    Pseudo-orbits cannot leave attracting sets \citep{akin1984evolutionary}. Consequently, no point outside the attractor is chain equivalent to a point inside it, and all points in the attractor are chain equivalent, so it is a chain component. It is a \emph{sink} chain component because no pseudo-orbits leave the set.
\end{proof}

\begin{corol}
    The content of the sink component of the preference graph is the unique sink chain component of a zero-sum game.
\end{corol}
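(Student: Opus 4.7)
The plan is to combine the two results immediately preceding this corollary. Theorem~\ref{attractor characterisation} identifies $\content(H)$ as the unique global attractor of the (symmetric or non-symmetric) replicator dynamic in the zero-sum game, while Lemma~\ref{attractors are sink chain components} states that any attractor is a sink chain component. Chaining these directly shows that $\content(H)$ is one sink chain component of the replicator flow; the only remaining task is to rule out further sink chain components.

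For uniqueness I would argue that every point $x \in X$ admits a pseudo-orbit into $\content(H)$, which forces any sink chain component to intersect---and hence coincide with---$\content(H)$, since chain components are pairwise disjoint. If $x \in \intr(X)$, global attraction means $\phi(x,t)$ enters every neighbourhood of $\content(H)$ for all sufficiently large $t$, so for any $\epsilon > 0$ and $T > 0$ a single long step produces an $(\epsilon, T)$-chain from $x$ into an $\epsilon$-neighbourhood of $\content(H)$. If $x$ lies on the boundary of $X$, I would first take a single $\epsilon$-jump into $\intr(X)$ and then apply the interior case; the concatenation is still an $(\epsilon, T)$-chain. A hypothetical second sink chain component $C$ disjoint from $\content(H)$ would therefore contain a point with a pseudo-orbit exiting $C$, contradicting the sink property.

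The hard work lies entirely upstream: the replicator-specific content sits in Theorem~\ref{attractor characterisation} (the symmetrisation trick and the Lyapunov argument on $x_H$), while Lemma~\ref{attractors are sink chain components} is a Conley-theoretic folklore fact about attractors in any flow. The corollary itself is essentially a bookkeeping exercise, and the only subtlety is the boundary case in the uniqueness argument, handled by the single $\epsilon$-perturbation sketched above.
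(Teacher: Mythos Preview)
Your proposal is correct and matches the paper's proof on the main step: both combine Theorem~\ref{attractor characterisation} with Lemma~\ref{attractors are sink chain components} to conclude that $\content(H)$ is a sink chain component. The only divergence is in the uniqueness argument. The paper does not argue via pseudo-orbits from arbitrary points; instead it invokes the result of \citet{biggar2023replicator} that \emph{every} sink chain component of the replicator contains $\content(H)$, and then uses disjointness of chain components. Your route is more self-contained: you exploit the specific fact, available here because $\content(H)$ is a \emph{global} attractor, that every point of $X$ admits a pseudo-orbit into $\content(H)$, so no second sink chain component can exist. The paper's route is shorter and reuses machinery that does not depend on global attraction (hence generalises beyond zero-sum games); yours avoids the external citation but is tied to the global-attractor conclusion of Theorem~\ref{attractor characterisation}.

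One small technical wrinkle in your boundary case: an $(\epsilon,T)$-chain requires flowing for time at least $T$ \emph{before} the first jump, so you cannot literally ``first take a single $\epsilon$-jump into $\intr(X)$'' from $x$. The fix is immediate---flow from $x$ for time $T$ (remaining on the boundary, which is invariant), then make the $\epsilon$-jump into $\intr(X)$, then apply the interior case---so this is a phrasing issue rather than a gap.
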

\begin{proof}
    By Lemma~4.2 and Theorem~3.3, the content of the sink component is a sink chain component. Uniqueness follows for the same reason as in Theorem~3.3: distinct sink chain components are disjoint, but every sink chain component contains the content~\citep{biggar2023replicator}.
\end{proof}

\end{document}